\documentclass[12pt]{article}
\usepackage{amsmath,amstext,amsthm,amssymb,bbm}
\usepackage{graphicx,subfig}
\usepackage{mathtools}
\usepackage[colorlinks=true,citecolor=blue]{hyperref}
\usepackage[square,numbers,compress]{natbib}
\usepackage{color}
\usepackage{comment,sidecap}
\usepackage{esint}


\PassOptionsToPackage{normalem}{ulem}
\usepackage{ulem}

\usepackage[title]{appendix}
\usepackage{framed}
\usepackage{bookmark}
\usepackage{enumerate}
\usepackage[margin=1.5cm]{geometry}
\usepackage{parskip}

\newcommand{\Lapl}{\mathcal{L}}
\newcommand{\ben}{\begin{enumerate}}
\newcommand{\een}{\end{enumerate}}
\newcommand{\bea}{\begin{eqnarray}}
\newcommand{\eea}{\end{eqnarray}}
\newcommand{\bit}{\begin{itemize}}
\newcommand{\eit}{\end{itemize}}
\newcommand{\be}{\begin{equation}}
\newcommand{\ee}{\end{equation}}
\newcommand{\benn}{\begin{equation*}}
\newcommand{\eenn}{\end{equation*}}
\newcommand{\bdm}{\begin{displaymath}}
\newcommand{\edm}{\end{displaymath}}

\DeclareMathOperator\erfc{erfc}

\DeclarePairedDelimiter\abs{\lvert}{\rvert}
\DeclarePairedDelimiter\norm{\lVert}{\rVert}
\newcommand{\vect}{\mathbf}

\theoremstyle{definition}

\newtheorem{theorem}{Theorem}
\newtheorem{prop}{Proposition}
\newtheorem{lemma}{Lemma}
\newtheorem*{remark}{Remark}
\newtheorem{cor}{Corollary}
\newtheorem{definition}{Definition}

\newcommand{\vc}{\mathbf}

\newcommand{\bxi}{\pmb\xi}
\newcommand{\betta}{\pmb\eta}
\newcommand{\id}{\mbox{d}}

\newcommand{\bnabla}{\pmb\nabla}

\newcommand{\indf}{\mathbbm{1}}
\newcommand{\X}{X_K^{\delta,h}}

\begin{document}
\title{Asymptotic dynamics of inertial particles with memory}
\author{Gabriel Provencher Langlois$^1$,\\ Mohammad Farazmand$^{1,2}\footnote{Corresponding author's email address: farazmam@ethz.ch}$, George Haller$^2$\\
{\small $^1$Department of Mathematics, ETH Zurich, R\"amistrasse 1, 8092 Zurich, Switzerland}\\
{\small $^2$Institute for Mechanical Systems, ETH Zurich, Tannenstrasse 3, 8092 Zurich, Switzerland}
}
\maketitle
\abstract{
Recent experimental and numerical observations have shown the significance of the Basset--Boussinesq memory term on the dynamics of small spherical rigid particles (or inertial particles) suspended in an ambient fluid flow. These observations suggest an algebraic decay to an asymptotic state, as opposed to the exponential convergence in the absence of the memory term. Here, we prove that the observed algebraic decay is a universal property of the Maxey--Riley equation. Specifically, the particle velocity decays algebraically in time to a limit that is $\mathcal O(\epsilon)$-close to the fluid velocity, where $0<\epsilon\ll 1$ is proportional to the square of the ratio of the particle radius to the fluid characteristic length-scale. These results follows from a sharp analytic upper bound that we derive for the particle velocity. For completeness, we also present a first proof of existence and uniqueness of global solutions to the Maxey--Riley equation, a nonlinear system of fractional-order differential equations.
}

\section{Introduction}
\label{sec:introduction}
The motion of a solid body transported by an ambient Newtonian fluid flow can, in principle, be determined by solving the Navier--Stokes equations with appropriate moving boundary conditions \cite{galdi2008hemodynamical,cartwright2010}. The resulting partial differential equations are, however, too complicated for mathematical analysis. Their numerical solutions are computationally expensive and yield little insight.

For the motion of a small spherical rigid body (or inertial particle), however, one can derive a reliable model by accounting for all the forces exerted on the particle due to the solid-fluid interaction. \citet{stokes1851} made the first attempt to obtain such a model for the oscillatory motion of an inertial particle. Later, \citet{basset2}, \citet{boussinesq} and \citet{oseen1927} studied the settling of a solid sphere under gravity in a quiescent fluid. The resulting equation is known as the BBO equation. To study the motion of inertial particles in non-uniform unsteady flow, \citet{Tchen} wrote the BBO equation in a frame of reference moving with the fluid, accounting for various inertial forces that arise in this frame.

The exact form of the forces exerted on the particle has been debated and corrected by several authors (see, e.g., \citet{corrsin1956}). A widely accepted form of the forces was derived by \citet{MR} from first principles. The resulting equation, with the later correction of \citet{auton1988} to the added mass term, is usually referred to as the Maxey--Riley (MR) equation.  

To describe the MR equation, let $\vect{u} : \mathcal{D} \times \mathbb{R}^+ \rightarrow \mathbb{R}^n$ denote a known velocity field describing the flow of a fluid in an open spatial domain $\mathcal{D} \subseteq \mathbb{R}^n$. Here, $n = 2$ or $n = 3$ for two- and three-dimensional flows, respectively. A fluid trajectory is then the solution of the differential equation $\vect{\dot{x}} = \vect{u}(\vect{x}, t)$ with some initial condition $\vect{x}(t_0) = \vect{x_0}$. An inertial particle, however, follows a different trajectory $\vect{y}(t) \in \mathcal{D}$. The particle velocity $\vc v(t)=\dot{\vc y}(t)$ satisfies the Maxey--Riley equation
\be
\begin{split}
\rho_p\dot{\vect{v}} =& \rho_f\frac{\mbox D\vect u}{\mbox Dt}\\
                       & +(\rho_p-\rho_f)\mbox{\textbf g}\\
                       & -\frac{9\nu\rho_f}{2a^2}\left(\vect v-\vect u-\frac{a^2}{6}\Delta\vect u \right)\\
                       & -\frac{\rho_f}{2}\left[\dot{\vect{v}}-\frac{\mbox D\;}{\mbox Dt}\left(\vect u+\frac{a^2}{10}\Delta\vect u\right)\right]\\
                       & -\frac{9\rho_f}{2a}\sqrt{\frac{\nu}{\pi}}\left[\int_{t_0}^t\frac{\dot{\vect w}(s)}{\sqrt{t-s}}\id s+\frac{\vect w(t_0)}{\sqrt{t-t_0}} \right],
\end{split}
\label{eq:MR_original}
\ee
where
\benn
\label{eq:MR_w}
\vect{w}\left(t\right) = \vect{v}\left(t\right) - \vect{u}\left(\vect{y}(t),t \right) - \frac{a^2}{6} \Delta \vect{u}(\vect y(t),t).
\eenn
Here, $\rho_p$ and $\rho_f$ are, respectively, the particle and fluid densities; $\nu$ is the kinematic viscosity of the fluid; $a$ is the particle radius and $\mbox{\textbf g}$ is the constant gravitational acceleration vector. The initial conditions for the inertial particle are given as $\vect{y}(t_0) = \vect{y}_0$ and $\vc v(t_0)=\vect{v}_0$, for some $t_0 \in \mathbb{R}^+$. The material derivative $\frac{\mbox D\;}{\mbox Dt} := \partial_t + \vect{u}\cdot \nabla$ denotes the time derivative along a fluid trajectory.

The right-hand side in \eqref{eq:MR_original} contains the various forces exerted on the particle. The terms written on separate lines are the force exerted by the undisturbed flow on the particle; the buoyancy force; the Stokes drag; the added mass term and the Basset--Boussinesq memory term.

These forces have varying orders of magnitude. In particular, the Basset--Bousinesq memory term, accounting for the lagging boundary layer developed around the sphere, is routinely neglected on the grounds that it is insignificant compared to the Stokes drag and added mass \cite[see, e.g.,][]{IP_maxey87,balkovsky2001}. Recent experimental and numerical studies, however, point to the contrary \cite{IP_candelier,toegel2006,gabrin2009,daitche2011memory,guseva2013influence,Daitche_NJP}.

The numerical simulations of \cite{daitche2011memory,guseva2013influence}, in particular, show the position of the particle to converge to its asymptotic limit algebraically. This is fundamentally different from the exponential convergence arising in the absence of the memory term \cite{rubin1995_IP,mograbi2006,IP_haller08}. In the present paper, we prove that the observations of \cite{daitche2011memory,guseva2013influence} are a universal and generic property of the MR equation with memory, irrespective of the fluid flow carrying the particles. 

The MR equation was originally derived under the assumption $\vc w(t_0)=0$. Later, \citet{MR_initCond} modified the original formulation to lift this unphysical restriction, obtaining equation \eqref{eq:MR_original} above. This equation can be written as a system of nonlinear fractional-order differential equations \cite{kobayashi,MR_EUR} in terms of the particle position $\vc y$ and relative velocity $\vc w$ (see equation \eqref{eq:MR_system_1comp} below). While there exist fundamental results for special classes of fractional-order differential equations (see, e.g., \cite{podlubny1998fractional}), the MR equation does not fit in any of these classes and requires separate treatment.

Even the existence and uniqueness of solutions to the MR equation is unclear. Only recently have \citet{MR_EUR} proved the existence, uniqueness and regularity of its \emph{local} solutions in a weak sense. They also showed that only under the unphysical assumption $\vc w(t_0)=0$ does the MR equation admit strong solutions.

Here, we prove \emph{global} existence and uniqueness of weak solutions to the MR equation. We also prove that the velocity $\vc v$ of a small particle of radius $a$ decays algebraically to an asymptotic state that is $\mathcal O(\frac{a^2}{L^2})$-close to the fluid velocity $\vc u$, where $L$ is a characteristic length scale of the fluid flow.

\section{Preliminaries}
\label{sec:preliminaries}
\subsection{The MR equation in dimensionless variables}
We rewrite the Maxey--Riley equation \eqref{eq:MR_original} in a form more appropriate for mathematical analysis. First, we rescale space, velocities and time using the characteristic length scale $L$, the characteristic velocity $U$ and the characteristic time scale $T=L/U$. Using the resulting dimensionless variables $\mathbf y\mapsto\mathbf y/L$, $\vc u\mapsto \vc u/U$, $\vc v\mapsto\vc v/U$ and $t\mapsto t/T$ and rearranging various terms, we write \eqref{eq:MR_original} as a system of first-order integro-differential equations
\be 
\begin{split} 
\frac{\id\vect{y}}{\id t} &= \vect{w} + \vect{A_u}(\vect{y}, t),\\
\frac{\id\vect{w}}{\id t} + \kappa \mu^{1/2}\frac{\id}{\id t}\left(\frac{1}{\sqrt{\pi}}\int_{t_0}^{t} \! \frac{\vect{w}(s)}{\sqrt{t - s}} \ \id s \right) +\mu \vect{w} &= - \vect{M_u}(\vect{y}, t)\vect{w} + \vect{B_u}(\vect{y}, t),\\
\vc y(t_0)=\vc y_0, &\quad  \vc w(t_0)=\vc w_0,
\end{split}\label{eq:MR_system_1}
\ee
with 
\begin{subequations} \label{eq:wterms}
\begin{alignat}{4}
\vect{w}(t)  =&\, \vect{v}(t) - \vect{u}(\vc y(t),t) - \frac{\gamma}{6}\mu^{-1} \Delta \vect{u}(\vc y(t),t), \label{eq:w} \\
\vect{A_u} =&\, \vect{u} + \frac{\gamma}{6}\mu^{-1}\Delta \vect{u}, \nonumber \\
\vect{B_u} =&\, \left(\frac{3R}{2} - 1\right)\left(\frac{\mbox D\vect{u}}{\mbox Dt} - \vect{g}\right) + \left(\frac{R}{20} - \frac{1}{6}\right)\gamma \mu^{-1} \frac{\mbox D\;}{\mbox Dt}\Delta \vect{u}\\
    &- \frac{\gamma}{6}\mu^{-1}\left[\nabla \vect{u} + \frac{\gamma}{6}\mu^{-1} \nabla \Delta \vect{u} \right] \Delta \vect{u}, \nonumber\\ 
     \vect{M_u} =&\, \nabla \vect{u} + \frac{\gamma}{6}\mu^{-1} \nabla \Delta \vect{u}.  \nonumber 
\end{alignat}
\end{subequations}

In deriving \eqref{eq:MR_system_1}, we used the identity 
$$\frac{\id}{\id t}\int_{t_0}^{t}\frac{\vect{w}(s)}{\sqrt{t - s}} \ \id s=\int_{t_0}^{t}\frac{\dot{\vc w}(s)}{\sqrt{t-s}}\id s+\frac{\vc w(t_0)}{\sqrt{t-t_0}},$$
obtained from carrying out the differentiation and then integrating by parts (see, e.g., \cite[Chapter 2]{podlubny1998fractional}).

The dimensionless parameters in \eqref{eq:wterms} are defined as
\be
R = \frac{2\rho_f}{\rho_f + 2\rho_p}, \qquad \mu = \frac{R}{\mbox{St}}, \qquad \kappa = \sqrt{\frac{9R}{2}}, \qquad \gamma = \frac{9R}{2\mbox{Re}},
\ee
where the Stokes (St) and the fluid Reynolds (Re) numbers are defined as
\be \label{eq:numbers}
\text{St} = \frac{2}{9}\left(\frac{a}{L}\right)^2\text{Re}, \qquad \text{Re} = \frac{UL}{\nu}.
\ee
Note that the vector fields $\vect{A_u},\vect{B_u}: \mathcal{D} \times \mathbb{R}^+ \rightarrow \mathbb{R}^n$ and the tensor field $\vect{M_u} : \mathcal{D} \times \mathbb{R}^+ \rightarrow \mathbb{R}^{n \times n}$ are known functions of the fluid velocity field $\vc u$. 

Equation \eqref{eq:w} defines a simple one-to-one correspondence between the particle velocity $\vc v$ and the variable $\vc w$. Once a solution $(\vc y,\vc w)$ of \eqref{eq:MR_system_1} is known, the particle velocity can readily be obtained as $\vc v(t)=\vc w(t)+\vc u(\vc y(t),t)+(\gamma\mu^{-1}/6)\Delta\vc u(\vc y(t),t)$. In the absence of the Fax\'en correction term $(\gamma\mu^{-1}/6)\Delta\vc u$, the variable $\vc w=\vc v-\vc u$ is the relative velocity between the particle and the fluid. 

The integral term in \eqref{eq:MR_system_1} is proportional to the Riemann-Liouville fractional derivative of order $1/2$, which is defined as
\be
\label{eq:riemann-liouville}
\frac{\id^{1/2}\vect{w}}{\id t^{1/2}} = \frac{\id}{\id t}\left(\frac{1}{\sqrt{\pi}}\int_{t_0}^{t} \! \frac{\vect{w}(s)}{\sqrt{t - s}} \ \id s\right),
\ee
with $t \geq t_0$ \cite{podlubny1998fractional}. Using this notation, we write the initial value problem \eqref{eq:MR_system_1} in the more compact form
\be 
\begin{split} 
\frac{\id\vect{y}}{\id t} &= \vect{w} + \vect{A_u}(\vect{y}, t),\\
\frac{\id\vect{w}}{\id t} + \kappa \mu^{1/2}\frac{\id^{1/2}\vc w}{\id t^{1/2}} +\mu \vect{w} &= - \vect{M_u}(\vect{y}, t)\vect{w} + \vect{B_u}(\vect{y}, t),\\
\vc y(t_0)=\vc y_0, &\quad  \vc w(t_0)=\vc w_0.
\end{split}\label{eq:MR_system_1comp}
\ee

\subsection{Set-up and assumptions}
\label{sec:preliminaries_assumptions}

We use $\abs{\cdot}$ to denote the Euclidean norm on $\mathbb{R}^{m}$. The induced operator norm of a square matrix acting on $\mathbb{R}^{m}$ is denoted by $\norm{\cdot}$. We denote the supremum norm of functions by $\norm{\cdot}_\infty$.

For future use, we also define the function space
\begin{equation}
X_{K}^{t,h} = \{f \in C\left(\left[t, t + h\right]; \mathbb{R}^m\right) : \norm{f}_\infty \leq K \}.
\end{equation}
Since $X_K^{t,h}$ is a closed subset of $C([t, t + h];\mathbb R^m)$, the metric space $(X^{t,h}_K, \norm{\cdot}_\infty)$ is a Banach space.

For the MR equation \eqref{eq:MR_system_1} (or its original form \eqref{eq:MR_original}) to make sense, the partial derivatives of the fluid velocity $\partial_x^{\alpha}\vc u(\vc x,t)$ and $\partial_t\partial_x^\beta\vc u(\vc x,t)$, with $|\alpha|\leq 3$ and $|\beta|\leq 2$ must exist. 

The Fax\'en corrections (the terms involving $\Delta\vc u$) are routinely neglected in practice \cite{IP_maxey87,balkovsky2001}. Upon neglecting the Fax\'en terms, the regularity assumption for the fluid velocity relaxes to the existence of the first order partial derivative with respect to space and time, i.e. $|\alpha|\leq 1$ and $\beta=0$. 

For proving the global existence and uniqueness of solutions of the MR equation, we need the above partial derivatives to be uniformly bounded and Lipschitz continuous in space and time. In particular, we assume the following.\vspace{10pt}

\begin{description}
\item[(H1)] The velocity field $\vc u(\vc x, t)$ is smooth enough such that the partial derivatives $\partial_x^{\alpha}\vc u$ with $|\alpha|\leq 3$ and the mixed partial derivatives $\partial_t\partial_x^\beta\vc u$ with $|\beta|\leq 2$ defined over the domain $\mathcal D \times R^+$ are uniformly bounded.
\end{description}

\begin{description}
\item[(H2)] The velocity field $\vc u(\vc x, t)$ is smooth enough such that the partial derivatives $\partial_x^{\alpha}\vc u$ with $|\alpha|\leq 3$ and the mixed partial derivatives $\partial_t\partial_x^\beta\vc u$ with $|\beta|\leq 2$ defined over the domain $\mathcal D \times R^+$ are uniformly Lipschitz continuous.
\end{description}

\begin{remark}
Neglecting the Fax\'en terms, assumptions (H1) and (H2) relax, respectively, to the uniform boundedness and uniform Lipschitz continuity of the fluid velocity $\vc u$ and acceleration $\mbox D\vc u/\mbox Dt$. 
\end{remark}

Assumption (H1) implies the existence of constants $L_A,L_B,L_M>0$ such that
\begin{equation} \label{eq:assumption1}
\norm{\vect{A_u}}_\infty\leq L_A,\quad\norm{\vect{B_u}}_\infty\leq L_B,\quad\norm{\vect{M_u}}_\infty\leq L_M.
\end{equation}
Assumption (H2), on the other hand, implies the existence of a constant $L_c>0$ such that
\begin{equation} \label{eq:assumption2}
\begin{split}
\abs{\vect{A_u}(\vect{y_1}, \tau) - \vect{A_u}(\vect{y_2}, \tau)} &\leq L_c\abs{\vect{y_1} - \vect{y_2}},\\
\abs{\vect{B_u}(\vect{y_1}, \tau) - \vect{B_u}(\vect{y_2}, \tau)} &\leq L_c\abs{\vect{y_1} - \vect{y_2}}, \\
\norm{\vect{M_u}(\vect{y_1}, \tau) - \vect{M_u}(\vect{y_2}, \tau)} &\leq L_c\abs{\vect{y_1} - \vect{y_2}},
\end{split}
\end{equation}
for all $\vect{y_1}$, $\vect{y_2} \in \mathcal{D}$ and all $\tau\in\mathbb R^+$. The supremum norms in \eqref{eq:assumption1} are taken over all $(\vc y,\tau)\in \mathcal D\times\mathbb R^+$.

\citet{MR_EUR} proved the following local existence and uniqueness result.
\begin{theorem}[Farazmand \& Haller, \cite{MR_EUR}]
\label{thm:local_e_u}
Assume that (H1) and (H2) hold. For any $(\vc y_0,\vc w_0)\in\mathcal D\times\mathbb R^n$, there exists $\Delta>0$ such that, over the time interval $[t_0, t_0+\Delta)$, the Maxey--Riley equation \eqref{eq:MR_system_1comp} has a unique solution $\left(\vect{y}(t),\vect{w}(t)\right)$ satisfying $\left(\vect{y}(t_0),\vect{w}(t_0)\right) = \left(\vect{y_0}, \vect{w_0}\right)$.
\end{theorem}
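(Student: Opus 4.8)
The plan is to recast the initial value problem \eqref{eq:MR_system_1comp} as a fixed‑point equation on a complete metric space and apply the Banach fixed‑point theorem; the fractional memory term is handled through the observation that, after one integration in time, it becomes a weakly singular (Abel‑type) integral operator whose norm is small on short intervals.

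\textbf{Reformulation.} Integrating both lines of \eqref{eq:MR_system_1comp} from $t_0$ to $t$, using the definition \eqref{eq:riemann-liouville} of the Riemann--Liouville derivative together with the fact that $\tfrac{1}{\sqrt\pi}\int_{t_0}^{t_0}(t_0-s)^{-1/2}\vc w(s)\,\id s=0$ for continuous $\vc w$, one obtains the equivalent (weak) system $\vc y(t)=\vc y_0+\int_{t_0}^t\big(\vc w(s)+\vect{A_u}(\vc y(s),s)\big)\,\id s$ and $\vc w(t)=\vc w_0-\tfrac{\kappa\mu^{1/2}}{\sqrt\pi}\int_{t_0}^t(t-s)^{-1/2}\vc w(s)\,\id s-\mu\int_{t_0}^t\vc w(s)\,\id s+\int_{t_0}^t\big(-\vect{M_u}(\vc y(s),s)\vc w(s)+\vect{B_u}(\vc y(s),s)\big)\,\id s$. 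This defines a map $\Phi=(\Phi_1,\Phi_2)$ on pairs of continuous $\mathbb R^n$‑valued functions, whose fixed points on $[t_0,t_0+\Delta)$ are precisely the solutions of the MR equation with data $(\vc y_0,\vc w_0)$, and conversely.

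\textbf{Self‑map and contraction.} Fix the radius $K:=2\max\{\abs{\vc y_0},\abs{\vc w_0}\}+1$ \emph{first}, and work on $\mathcal X:=X_K^{t_0,h}\times X_K^{t_0,h}$, a complete metric space (closed subset of $C([t_0,t_0+h];\mathbb R^n)^2$) under $\norm{(\vc y,\vc w)}:=\max\{\norm{\vc y}_\infty,\norm{\vc w}_\infty\}$. Using the uniform bounds \eqref{eq:assumption1} and the elementary estimate $\int_{t_0}^t(t-s)^{-1/2}\,\id s=2\sqrt{t-t_0}\le 2\sqrt h$, one finds $\norm{\Phi_1(\vc y,\vc w)}_\infty\le\abs{\vc y_0}+h(K+L_A)$ and $\norm{\Phi_2(\vc y,\vc w)}_\infty\le\abs{\vc w_0}+2\kappa\mu^{1/2}\pi^{-1/2}\sqrt h\,K+\mu hK+h(L_MK+L_B)$; since $K$ is already fixed, choosing $h=h(K)>0$ small enough makes both bounds $\le K$, so $\Phi$ maps $\mathcal X$ into itself. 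For the contraction estimate, the Lipschitz bounds \eqref{eq:assumption2}, the uniform bound $\norm{\vect{M_u}}_\infty\le L_M$, and the splitting $\vect{M_u}(\vc y_1,\cdot)\vc w_1-\vect{M_u}(\vc y_2,\cdot)\vc w_2=\vect{M_u}(\vc y_1,\cdot)(\vc w_1-\vc w_2)+\big(\vect{M_u}(\vc y_1,\cdot)-\vect{M_u}(\vc y_2,\cdot)\big)\vc w_2$ yield $\norm{\Phi(\vc y_1,\vc w_1)-\Phi(\vc y_2,\vc w_2)}\le C(h)\,\norm{(\vc y_1,\vc w_1)-(\vc y_2,\vc w_2)}$ with $C(h)=O(\sqrt h)$ (the $\sqrt h$ coming from the memory term, every other contribution being $O(h)$); shrinking $h$ further gives $C(h)<1$. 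The Banach fixed‑point theorem then produces a unique fixed point of $\Phi$ in $\mathcal X$, hence a unique solution on $[t_0,t_0+\Delta)$ with $\Delta:=h$. To upgrade to uniqueness among \emph{all} solutions, one uses that any solution is continuous, hence lies in $X_{K'}^{t_0,\delta}$ for some $K'$ and some $\delta\le h$; on a sufficiently small subinterval it coincides with the contraction's fixed point, and a standard continuation/connectedness argument propagates the coincidence over all of $[t_0,t_0+\Delta)$.

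\textbf{Main obstacle.} The only genuinely nonstandard ingredient is the memory term: for merely continuous $\vc w$ the fractional derivative $\id^{1/2}\vc w/\id t^{1/2}$ need not be integrable, but in the integrated formulation it is replaced by the Abel operator $\vc w\mapsto\int_{t_0}^t(t-s)^{-1/2}\vc w(s)\,\id s$, which maps $C([t_0,t_0+h];\mathbb R^n)$ to itself with operator norm $2\sqrt h\to 0$ — exactly what closes both the self‑map and the contraction estimate. A secondary subtlety is that the two equations are genuinely coupled, so the fixed point must be posed on pairs and the radius $K$ must be chosen before $h$ is shrunk. (An alternative, heavier route solves the linear nonautonomous fractional equation for $\vc w$ explicitly via the Laplace transform — producing a resolvent kernel expressible through error functions — and reduces the problem to a single integral equation for $\vc y$; this is more laborious but yields the sharper bounds needed for the decay results later in the paper.)
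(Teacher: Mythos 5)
Your argument is correct, and it takes a genuinely different route from the framework the paper is built around. The paper cites this local result from Farazmand \& Haller without reproving it, and its own machinery --- Definition 1, Theorems \ref{thm:homogeneous_solution} and \ref{thm:properties}, and the global continuation argument of Section \ref{sec:global} --- rests on inverting the linear left-hand side of the $\vc w$-equation via the Laplace transform to obtain the resolvent kernel $\psi$ of \eqref{eq:l_relaxation}, with existence then handled by Schauder's fixed-point theorem and uniqueness argued separately; the Remark after Section \ref{sec:unique} explicitly notes that a direct Banach contraction is blocked in their continuation step because the contraction estimate there requires the a priori bound \eqref{eq:int_ineq}, which is available only for genuine mild solutions. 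You instead integrate the equation once, keep the Basset term as the Abel operator $\vc w\mapsto\int_{t_0}^{t}(t-s)^{-1/2}\vc w(s)\,\id s$, and close an honest Banach contraction by fixing the ball radius $K$ before shrinking $h$, so that the bound $\abs{\vc w_2}\le K$ needed in the $\vect{M_u}$-splitting comes from the choice of ball rather than from an a priori estimate --- this neatly sidesteps the paper's obstruction, which is specific to the continuation setting, not to the initial local problem. Your route is more elementary (no resolvent kernel, and the only nontrivial ingredient is $\int_{t_0}^{t}(t-s)^{-1/2}\,\id s=2\sqrt{t-t_0}$, correctly giving the $O(\sqrt h)$ contraction constant); the paper's $\psi$-kernel route pays off later since $\psi$ is what drives the decay bounds. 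Two small things to make explicit: (a) the paper's Definition 1 of mild solution is stated via \eqref{eq:w_inteq}, not your Abel form; you should note (it is a one-line Laplace-transform computation) that a continuous fixed point of your integral equation is equivalently a mild solution in that sense, so that what you construct is the object the rest of the paper manipulates; and (b) the ``standard continuation/connectedness'' step deserves a sentence acknowledging the subtlety of Section \ref{sec:notDS} --- restarting the uniqueness comparison at a later $t_1$ is legitimate only because the two candidate solutions already agree on $[t_0,t_1]$, so their history contributions to the Abel integral cancel and only the $O(\sqrt{t-t_1})$ tail survives, which is precisely what lets the argument close.
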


\subsection{The MR equation does not generate a dynamical system}\label{sec:notDS}
For ordinary differential equations, one may construct global solutions by continuation. In particular, given a local solution $(\vc y(t),\vc w(t))$ existing on a time interval $[t_0,t_0+\Delta_1)$, one shows that the solution does not blow up at $t=t_0+\Delta_1$. Then initializing the ordinary differential equation from time $t=t_0+\Delta_1$ with initial condition $(\vc y(t_0+\Delta_1),\vc w(t_0+\Delta_1))$, the local existence and uniqueness result is reapplied to show that the solution can be extended to an interval $[t_0,t_0+\Delta_1+\Delta_2)$. Repeating the above steps, the solution can be extended to a time interval $[t_0,t_0+\Delta_1+\Delta_2+\Delta_3+\cdots)$. Finally, one shows that the infinite series $\Delta_1+\Delta_2+\Delta_3+\cdots$ diverges and infers global existence and uniqueness.

This continuation argument assumes that the flow map $\vc F^t_{t_0}:(\vc y_0,\vc w_0)\mapsto (\vc y(t),\vc w(t))$ has the semi-group property $\vc F^t_{t_0}=\vc F^t_{t_1}\circ \vc F^{t_1}_{t_0}$ for all $t_0< t_1<t$. Due to the fractional derivative, however, the flow map of the MR equation \eqref{eq:MR_system_1comp} is not a semi-group. 

To see this, consider the solution $(\vc y(t),\vc w(t))$ starting from $(\vc y_0,\vc w_0)$ at time $t_0$. Due to the Basset history force (i.e., fractional derivative in \eqref{eq:MR_system_1comp}), the trajectory $(\vc y(t),\vc w(t))$ for $t>t_1$ is influenced by its entire past history. A trajectory initialized from $(\vc y(t_1),\vc w(t_1))$ is, however, ignorant of this history and therefore will follow a different path (see Fig. \ref{fig:MRDS}, for an illustration).
\begin{figure}[h!]
\centering\includegraphics[width=.4\textwidth]{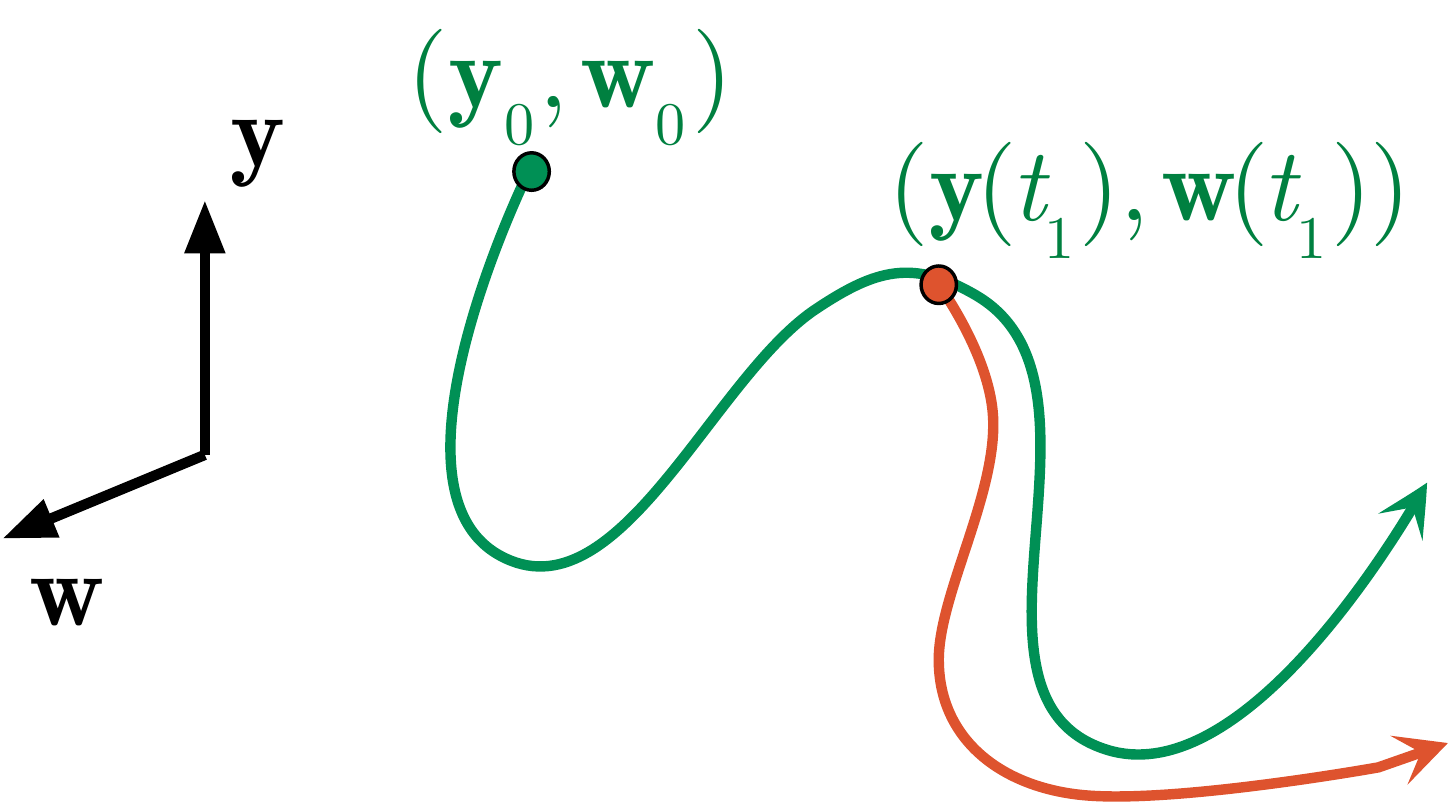}
\caption{A trajectory $(\vc y(t),\vc w(t))$ of the MR equation \eqref{eq:MR_system_1comp} initialized from $(\vc y_0,\vc w_0)$ and passing through $(\vc y(t_1),\vc w(t_1))$ at time $t_1$ (green curve). A trajectory initialized from $(\vc y(t_1),\vc w(t_1))$ at time $t_1$ (red curve) does not follow the trajectory $(\vc y(t),\vc w(t))$.}
\label{fig:MRDS}
\end{figure}

As a result, the usual continuation methods for ODEs do not apply here. In Section \ref{sec:cont}, we construct a particular continuation suitable for the MR equation. 

\subsection{Rescaling time}\label{sec:rescale}
We introduce a rescaling of time that further simplifies the forthcoming analysis. 
Dividing the $\vect{w}$ component of equation \eqref{eq:MR_system_1} by $\mu$ and letting $\epsilon := \frac{1}{\mu}$, we get
\be
\begin{split}
\frac{\id\vect{y}}{\id t} &= \vect{w} + \vect{A_u}(\vect{y}, t), \\
\epsilon\frac{\id\vect{w}}{\id t}+ \epsilon^{1/2}\kappa\frac{\id^{1/2}\vect{w}}{\id t^{1/2}} +\vect{w}&=  - \epsilon \vect{M_u}(\vect{y}, t)\vect{w} + \epsilon \vect{B_u}(\vect{y}, t),\\
\vc y(t_0)=\vc y_0, &\quad  \vc w(t_0)=\vc w_0.
\label{eq:MR_rescaled01}
\end{split}
\ee
Note that by \eqref{eq:numbers}, $\epsilon = \frac{\text{St}}{\text{R}} = \frac{2}{9R}\left(\frac{a}{L}\right)^2\text{Re}$. Since the MR equation is valid for small particles, i.e. $a\ll L$, we find that $\epsilon$ is a small parameter: $0\leq\epsilon\ll 1$. 

Rescaling time as $t = t_0+\epsilon \tau$, we have
\be 
\begin{split} 
\frac{\id\tilde{\vect{y}}}{\id\tau} = &\, \epsilon \left[\tilde{\vect{w}} + \tilde{\vect{A}}_{\vc u}(\tilde{\vect{y}}, \tau)\right],\\
\frac{\id\tilde{\vect{w}}}{\id\tau} + \kappa \frac{\id^{1/2}\tilde{\vect{w}}}{\id\tau^{1/2}} + \tilde{\vect{w}} = & \epsilon\left[ -\tilde{\vect{M}}_{\vc u}(\tilde{\vect{y}}, \tau)\tilde{\vect{w}} + \tilde{\vect{B}}_{\vc u}(\tilde{\vect{y}}, \tau)\right],\\
\tilde{\vc y}(0)=\vc y_0, &\quad  \tilde{\vc w}(0)=\vc w_0,
\end{split}\label{eq:MR_system_2}
\ee
where
\begin{subequations}
\begin{equation*}
\tilde{\vc y}(\tau)=\vc y(t_0+\epsilon\tau),\ \tilde{\vc w}(\tau)=\vc w(t_0+\epsilon\tau),
\end{equation*}
\begin{equation*}
\tilde{\vc A}_{\vc u}(\tilde{\vc y},\tau)=\vc{A_u}(\vc y,t_0+\epsilon\tau),\ \tilde{\vc B}_{\vc u}(\tilde{\vc y},\tau)=\vc{B_u}(\vc y,t_0+\epsilon\tau),\ \tilde{\vc M}_{\vc u}(\tilde{\vc y},\tau)=\vc{M_u}(\vc y,t_0+\epsilon\tau),
\end{equation*}
\end{subequations}
and
$$\frac{\id^{1/2}\tilde{\vect{w}}}{\id\tau^{1/2}} = \frac{\id}{\id\tau}\left(\frac{1}{\sqrt{\pi}}\int_{0}^{\tau} \frac{\tilde{\vect{w}}(s)}{\sqrt{\tau - s}} \ \id s\right).$$

The above rescaling of time has been previously used \cite{rubin1995_IP,mograbi2006,IP_haller08} for the asymptotic analysis of the MR equation without memory. While this rescaling is not necessary for the forthcoming results, it greatly simplifies the algebra. 

Note that a unique solution of the IVP \eqref{eq:MR_system_2} over the time interval $[0,\delta)$ exists if and only if the unscaled IVP \eqref{eq:MR_system_1comp} has a unique solution over the time interval $[t_0,t_0+\epsilon\delta)$. Therefore, in the following, we study the IVP \eqref{eq:MR_system_2}. For notational simplicity, we omit the tilde signs from all the variables.

\section{Asymptotic behavior}
\label{sec:properties}
\subsection{$\epsilon = 0$ limit}
\label{sec:properties_eps=0}
We start with the fictitious limit $\epsilon=0$. In this limit, $\vect{y}(\tau) = \vect{y_0}$ is constant for all times and $\vc w$ satisfies
\be \label{eq:MR_relaxation}
\frac{\id\vect{w}}{\id\tau} + \kappa \frac{\id^{1/2}\vect{w}}{\id\tau^{1/2}} + \vect{w} = 0,\quad \vc w(0)=\vc w_0,
\ee
a linear equation tractable by Laplace transforms \cite{mainardi_lecNotes,podlubny1998fractional}. This leads to the following result.

\begin{theorem}\label{thm:homogeneous_solution}
The general solution of \eqref{eq:MR_relaxation} is given by $\vect{w}(\tau;\vect{w_0}) =\psi_{\kappa}(\tau) \vect{w_0}$, where the positive, scalar function $\psi_\kappa:[0,\infty)\rightarrow\mathbb R^+$ has the following properties.
\begin{enumerate}
\item $\psi_\kappa$ is given by the inverse Laplace transform
\be \label{eq:l_relaxation}
\psi_\kappa(\tau) = \Lapl^{-1} \left[\frac{1}{\left(\sqrt{s} + \lambda_{+}\right)\left(\sqrt{s} + \lambda_{-}\right)}\right](\tau),
\ee
where
\begin{equation*}
\lambda_{\pm} = \frac{\kappa \pm \sqrt{\kappa^2 - 4}}{2}.
\end{equation*}
\item $\psi_\kappa$ obeys the asymptotic decay rate
\be
\psi_\kappa(\tau) \sim \frac{\kappa}{2\sqrt{\pi}}\tau^{-3/2} + \mathcal{O}\left(\tau^{-5/2}\right) \quad \text{as}\quad \tau \rightarrow \infty.
\ee
\item There is a differentiable function $\phi_\kappa:[0,\infty)\rightarrow\mathbb R^+$ such that $\psi_\kappa = - \phi_\kappa'$.
\item The functions $\psi_\kappa$ and $\phi_\kappa$ are smooth over $\in(0,\infty)$ and completely monotonic decreasing, i.e.,
\benn
(-1)^j\psi_\kappa^{(j)}(\tau)\geq 0,\ \ \ (-1)^j\phi_\kappa^{(j)}(\tau)\geq 0,\quad j=0,1,2,\cdots,\quad \forall\tau>0
\eenn
\item $\psi_\kappa(0) = 1$ and $\phi_\kappa(0) = 1$.  
\end{enumerate}
\end{theorem}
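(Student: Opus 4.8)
The plan is to solve \eqref{eq:MR_relaxation} explicitly via Laplace transform and then read off every qualitative property of $\psi_\kappa$ from the resulting transform and its branch‑cut inversion.

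First I would Laplace‑transform \eqref{eq:MR_relaxation} componentwise. Writing $\widehat{\vc w}(s)=\Lapl[\vc w](s)$ and using $\Lapl[\dot{\vc w}]=s\widehat{\vc w}-\vc w_0$ together with $\Lapl[\id^{1/2}\vc w/\id\tau^{1/2}]=\sqrt{s}\,\widehat{\vc w}$ (the boundary term vanishes because the half‑integral of a bounded function vanishes at $\tau=0$), one obtains $(s+\kappa\sqrt s+1)\widehat{\vc w}=\vc w_0$, hence $\widehat{\vc w}(s)=\vc w_0/(s+\kappa\sqrt s+1)$. Since $s+\kappa\sqrt s+1=(\sqrt s)^2+\kappa\sqrt s+1=(\sqrt s+\lambda_+)(\sqrt s+\lambda_-)$ with $\lambda_\pm$ as in the statement, inversion gives $\vc w=\psi_\kappa\vc w_0$ with $\psi_\kappa$ as in \eqref{eq:l_relaxation}; linearity (or Theorem \ref{thm:local_e_u}) yields uniqueness. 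This settles item 1.

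The core step is to convert \eqref{eq:l_relaxation} into an explicit Stieltjes representation. I would check that $\widehat\psi_\kappa(s)=1/(s+\kappa\sqrt s+1)$ extends holomorphically to $\mathbb C\setminus(-\infty,0]$: the only branch point is $s=0$, and the denominator has no zero on the principal sheet, since a zero would force $\sqrt s=-\lambda_\pm$, impossible because $\mathrm{Re}\sqrt s>0$ there whereas $\mathrm{Re}(-\lambda_\pm)=-\kappa/2<0$ (irrespective of the sign of $\kappa^2-4$). Deforming the Bromwich contour onto a Hankel keyhole around the negative real axis — the large arc contributes nothing because $\widehat\psi_\kappa(s)=\mathcal O(1/s)$, the small circle at $0$ contributes nothing because $\widehat\psi_\kappa$ is bounded there — and using $\sqrt{r e^{\pm i\pi}}=\pm i\sqrt r$, I obtain
\be
\psi_\kappa(\tau)=\frac{\kappa}{\pi}\int_0^\infty \frac{\sqrt r\,e^{-r\tau}}{(1-r)^2+\kappa^2 r}\,\id r,\qquad \tau>0,
\ee
whose density is strictly positive on $(0,\infty)$ because $r^2+(\kappa^2-2)r+1$ has no nonnegative root (straightforward case analysis on the sign of $\kappa^2-4$). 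Everything else flows from this single formula: positivity of $\psi_\kappa$ is immediate; complete monotonicity of $\psi_\kappa$ (part of item 4) is Bernstein's theorem, as $\psi_\kappa$ is the Laplace transform of a nonnegative measure; smoothness on $(0,\infty)$ follows by differentiating under the integral, the derivatives $r^k e^{-r\tau}$ being dominated against the density uniformly for $\tau$ in compacta of $(0,\infty)$; and the decay rate in item 2 follows from Watson's lemma using the expansion $\sqrt r/((1-r)^2+\kappa^2 r)=r^{1/2}-(\kappa^2-2)r^{3/2}+\mathcal O(r^{5/2})$ near $r=0$ together with $\int_0^\infty r^{1/2}e^{-r\tau}\,\id r=\tfrac{\sqrt\pi}{2}\tau^{-3/2}$, which reproduces exactly the constant $\kappa/(2\sqrt\pi)$ and the $\mathcal O(\tau^{-5/2})$ remainder.

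For items 3 and 5 I would set $\phi_\kappa(\tau):=\int_\tau^\infty\psi_\kappa(s)\,\id s$; this is finite by the $\tau^{-3/2}$ decay, differentiable with $\phi_\kappa'=-\psi_\kappa$, and by Fubini $\phi_\kappa(\tau)=\frac{\kappa}{\pi}\int_0^\infty e^{-r\tau}/(\sqrt r\,((1-r)^2+\kappa^2 r))\,\id r$, so $\phi_\kappa$ is positive, smooth on $(0,\infty)$, and completely monotonic by Bernstein once more — completing item 4. Finally $\psi_\kappa(0)=\lim_{s\to\infty}s\widehat\psi_\kappa(s)=1$ by the initial‑value theorem, and $\phi_\kappa(0)=\int_0^\infty\psi_\kappa=\lim_{s\to0^+}\widehat\psi_\kappa(s)=1$ by monotone convergence, giving item 5. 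The main obstacle is the contour‑deformation step: making the Bromwich‑to‑Hankel passage rigorous requires the holomorphic‑extension and no‑pole verification together with the arc estimates at $0$ and $\infty$; this is classical for fractional relaxation equations and close to arguments in the cited works of Mainardi and Podlubny, but it is where the care is concentrated. (Conceptually, one can alternatively note that $s+\kappa\sqrt s+1$ is a complete Bernstein function, so its reciprocal is a Stieltjes function and hence the Laplace transform of a nonnegative measure, which immediately gives the complete monotonicity of $\psi_\kappa$; the explicit density is still needed for the sharp constant in item 2.)
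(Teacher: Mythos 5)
Your proposal is correct, and it reaches the theorem by a genuinely different (and in some respects tighter) route than the paper's own Appendix A. Both start identically, Laplace-transforming \eqref{eq:MR_relaxation} to obtain $\widehat\psi_\kappa(s)=1/(s+\kappa\sqrt s+1)=1/((\sqrt s+\lambda_+)(\sqrt s+\lambda_-))$. From there the paper inverts by hand: it splits into the three cases $\kappa>2$, $\kappa=2$, $0<\kappa<2$, expresses $\psi_\kappa$ explicitly through the Mittag--Leffler function $E_{1/2}(-z)=e^{z^2}\erfc z$ (or, in the complex-conjugate case, through Voigt functions), and extracts the $\tau^{-3/2}$ decay from the $\erfc$ asymptotic series; for items 3--5 it then simply cites Mainardi's relaxation-function analysis rather than proving them. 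You instead deform the Bromwich contour onto a Hankel keyhole once and for all, arriving at the single Stieltjes representation $\psi_\kappa(\tau)=\frac{\kappa}{\pi}\int_0^\infty\frac{\sqrt r\,e^{-r\tau}}{(1-r)^2+\kappa^2 r}\,\id r$ valid for every $\kappa>0$. Your no-zero check (since $\mathrm{Re}\sqrt s\ge 0$ on the principal sheet while $\mathrm{Re}(-\lambda_\pm)=-\kappa/2<0$) and the observation that $r^2+(\kappa^2-2)r+1$ has no root in $[0,\infty)$ are exactly what make the keyhole passage and the strict positivity of the density legitimate, and I verified that Watson's lemma on the density expansion $r^{1/2}-(\kappa^2-2)r^{3/2}+\cdots$ does reproduce $\frac{\kappa}{2\sqrt\pi}\tau^{-3/2}+\mathcal O(\tau^{-5/2})$. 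What you gain is uniformity (no case analysis), a self-contained proof of complete monotonicity via Bernstein's theorem rather than a citation, and an equally self-contained proof of items 3 and 5 via $\phi_\kappa(\tau)=\int_\tau^\infty\psi_\kappa$, the initial-value theorem, and $\widehat\psi_\kappa(0)=1$. What the paper's route gains is closed-form expressions for $\psi_\kappa$ in terms of tabulated special functions, which it later uses for the numerical evaluation in Figure \ref{fig:phipsi}; your integral representation could serve the same purpose but the paper's is more directly computable.
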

\begin{proof}
See Appendix \ref{app:calculation} for the proof of 1 and 2 and the explicit calculation of $\psi_\kappa$. For the proof of 3, 4 and 5, see the properties demonstrated for $u_\delta(t) (\psi_\kappa(\tau))$ and $u_0(t) (\phi_\kappa(\tau))$ in \cite[Section 4]{mainardi_lecNotes}. 
\end{proof}
Figure \ref{fig:phipsi} shows the functions $\phi_\kappa$ and $\psi_\kappa$ computed by numerically inverting their Laplace transforms. It follows from properties 2 and 3 from Theorem \ref{thm:homogeneous_solution} that $\phi_\kappa$ decays asymptotically as $\tau^{-1/2}$, as confirmed by the numerics. 
\begin{figure}
\centering
\includegraphics[width=.45\textwidth]{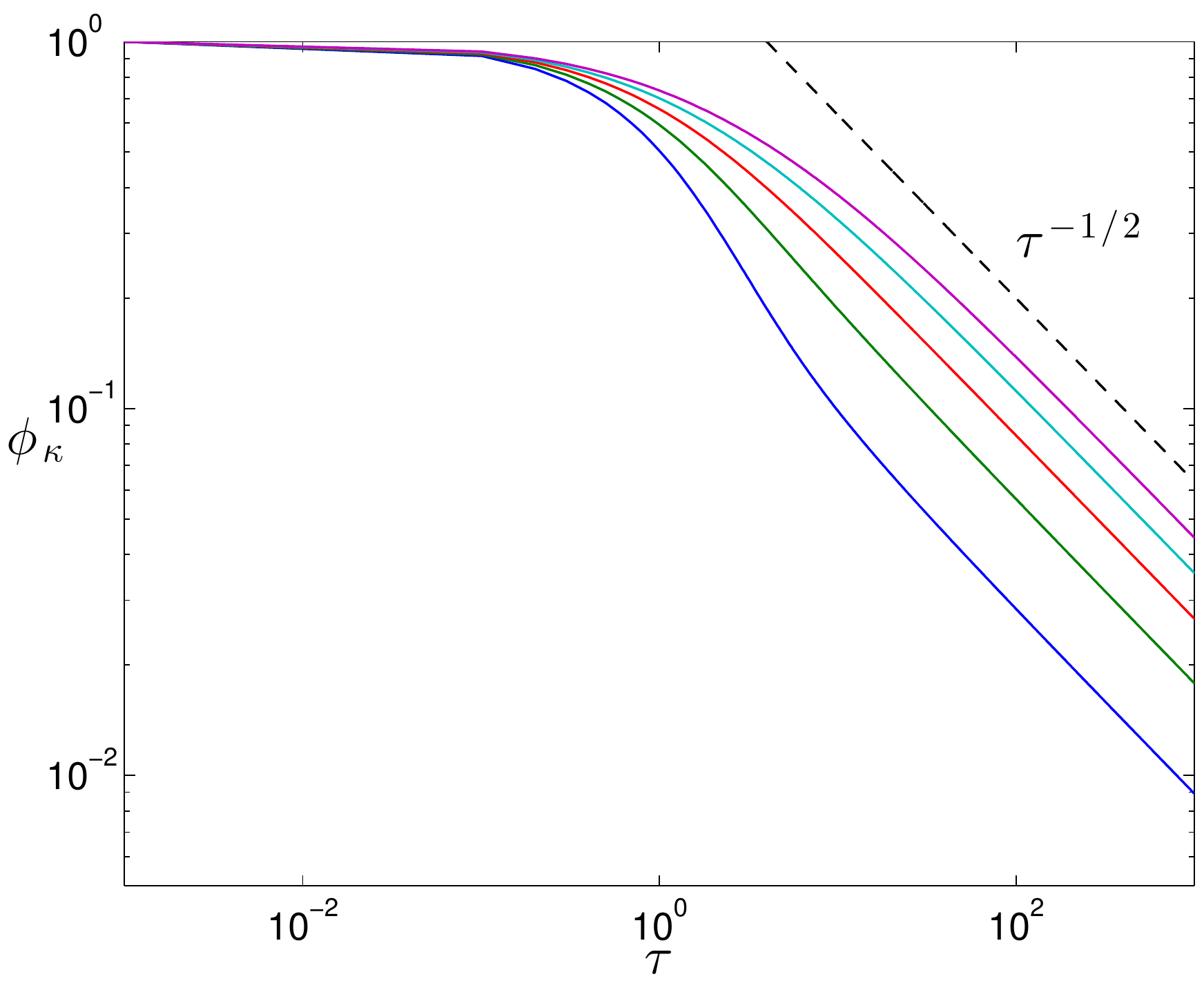}\hspace{.04\textwidth}
\includegraphics[width=.45\textwidth]{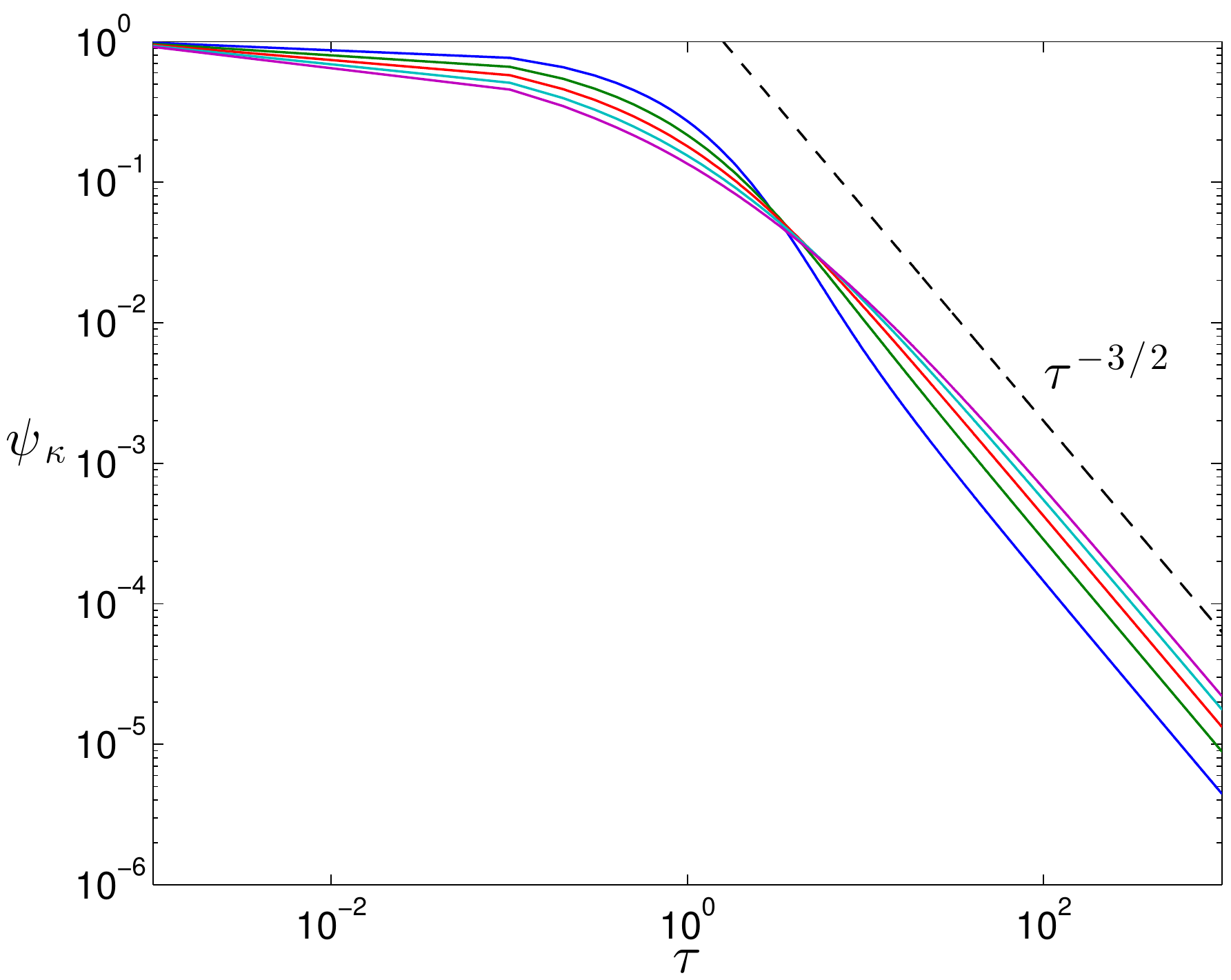}
\caption{The functions $\phi_\kappa$ and $\psi_\kappa=-\phi'_\kappa$.  The functions are evaluated for $\kappa=0.5$ (blue), $\kappa=1$ (green),
$\kappa=1.5$ (red), $\kappa=2$ (cyan) and $\kappa=2.5$ (magenta).}
\label{fig:phipsi}
\end{figure}

Since the properties of Theorem \ref{thm:homogeneous_solution} hold for any $\kappa>0$, we omit the dependence of $\psi_\kappa$ and $\phi_\kappa$ on $\kappa$ and write $\psi$ and $\phi$, respectively.

\subsection{$\epsilon > 0$ case}
\label{sec:properties_eps>0}
Now we analyze the general case of $\epsilon>0$, i.e., 
\be \label{eq:MR_relaxation_nonhomogeneous}
\begin{split}
\frac{\id\vect{y}}{\id\tau} &= \epsilon\left[\vect{w} + \vect{A_u}(\vect{y}, \tau)\right] \\
\frac{\id\vect{w}}{\id\tau} + \kappa \frac{\id^{1/2}\vect{w}}{\id\tau^{1/2}} + \vect{w} &= \epsilon\left[ -\vect{M_u}(\vect{y}, \tau)\vect{w} + \vect{B_u}(\vect{y}, \tau)\right],\\
\vc y(0)=\vc y_0, &\quad  \vc w(0)=\vc w_0,
\end{split}
\ee
which is equation \eqref{eq:MR_system_2} with tilde signs omitted. Solutions of \eqref{eq:MR_relaxation_nonhomogeneous} satisfy the integral equations
\be
\begin{split}
\label{eq:w_inteq}
\vect{y}(\tau) &= \vect{y_0} + \epsilon \int_{0}^{\tau} \! \vect{w}(s) + \vect{A_u}(\vect{y}(s), s) \ \id s, \\
\vect{w}(\tau) &= \psi(\tau)\vect{w_0} + \epsilon \int_{0}^{\tau} \! \psi(\tau - s)\left[-\vect{M_u}(\vect{y}(s), s)\vect{w}(s) + \vect{B_u}(\vect{y}(s), s)\right]\id s,
\end{split}
\ee
where $\psi(\tau)$ is given by \eqref{eq:l_relaxation} and satisfies the properties listed in Theorem \ref{thm:homogeneous_solution}. 

This integral equation is essentially a variation-of-constants formula. The $\vect{y}$-equation in \eqref{eq:w_inteq} is obtained by formal integration of the $\id\vect{y}/\id\tau$ equation of \eqref{eq:MR_relaxation_nonhomogeneous}. For the $\vc w$-equation, let $\vect{W}(s)$ denote the Laplace transform of $\vect{w}(\tau)$. Taking the Laplace transform of \eqref{eq:MR_relaxation_nonhomogeneous} yields
\benn \label{eq:l_nonhomogeneous}
\vect{W}(s) = \frac{\vect{w_0}}{\left(\sqrt{s} + \lambda_{+}\right)\left(\sqrt{s} + \lambda_{-}\right)} + \frac{\Lapl \left[-\vect{M_u}(\vect{y}(\tau), \tau)\vect{w}(\tau) + \vect{B_u}(\vect{y}(\tau), \tau)\right](s)}{\left(\sqrt{s} + \lambda_{+}\right)\left(\sqrt{s} + \lambda_{-}\right)}.
\eenn
Taking the inverse Laplace transform, we obtain the $\vc w$-component of equation \eqref{eq:w_inteq} where $\psi(\tau)$ is given by \eqref{eq:l_relaxation}.
\begin{definition}
A \emph{mild} (or weak) solution of the IVP \eqref{eq:MR_relaxation_nonhomogeneous} is a function $(\vc y,\vc w):[0,\delta)\rightarrow\mathbb R^{2n}$ that solves the integral equation \eqref{eq:w_inteq}. The existence time $\delta>0$ can potentially be infinity.
\end{definition}

Using the integral equation \eqref{eq:w_inteq}, we find an upper bound for $|\vc w(\tau;\vc y_0,\vc w_0)|$ and its asymptotic limit. 
\begin{theorem}
\label{thm:properties}
Assume that (H1) holds and $\epsilon < 1/L_M$. Let $(\vc y,\vc w):[0,\delta)\rightarrow\mathbb R^{2n}$ be a mild solution of \eqref{eq:MR_relaxation_nonhomogeneous} where $[0,\delta)$ is the maximal interval of existence of such solutions.
\begin{enumerate}[(i)]
\item An explicit envelope for $|\vc w(\tau;\vc y_0,\vc w_0)|$ is given by
\begin{equation}
|\vc w(\tau;\vc y_0,\vc w_0)|\leq \abs{\vect{w}_0}\left[\sum_{j=1}^{\infty}(\epsilon L_M)^{j-1}\psi^{*j}(\tau) \right] + \epsilon L_B\left(1-\phi(\tau)\right) +\frac{\epsilon^2 L_M L_B}{1 - \epsilon L_M},
\label{eq:envelope}
\end{equation}
where $\psi^{*j}$ is the $j$-fold convolution of $\psi$. Moreover, the series converges uniformly and is bounded for all $\tau$.
\item 
$\abs{\vect{w}(\tau;\vc y_0,\vc w_0)}$ is bounded for all $\tau\in[0,\delta)$. Specifically,
\be \label{eq:int_ineq}
\sup_{0\leq \tau<\delta}\abs{\vect{w}(\tau;\vc y_0,\vc w_0)} \leq \frac{\abs{\vect{w_0}} + \epsilon L_B}{1 - \epsilon L_M}.
\ee
\item If $\delta=\infty$, the asymptotic limit of $\vect w$ satisfies
\begin{equation}
\limsup_{\tau\rightarrow\infty}|\vc w(\tau;\vc y_0,\vc w_0)|\leq \frac{\epsilon L_B}{1-\epsilon L_M}.
\label{eq:asym_ub}
\end{equation}
\end{enumerate}
\end{theorem}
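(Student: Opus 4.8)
The strategy is to collapse the vector system to a scalar linear Volterra integral inequality for $h(\tau):=\abs{\vc w(\tau)}$ and then solve it by Picard-type iteration against the convolution kernel $\psi$. Taking Euclidean norms in the $\vc w$-component of \eqref{eq:w_inteq}, using $\norm{\vc M_u}_\infty\le L_M$ and $\norm{\vc B_u}_\infty\le L_B$ from \eqref{eq:assumption1}, and using the identity $\int_0^\tau\psi(\tau-s)\,\id s=\int_0^\tau\psi(s)\,\id s=\phi(0)-\phi(\tau)=1-\phi(\tau)$ (which follows from $\psi=-\phi'$ and $\phi(0)=1$ in Theorem \ref{thm:homogeneous_solution}), one obtains
\[
h(\tau)\ \le\ g(\tau)+\epsilon L_M\int_0^\tau\psi(\tau-s)\,h(s)\,\id s,\qquad g(\tau):=\psi(\tau)\abs{\vc w_0}+\epsilon L_B\bigl(1-\phi(\tau)\bigr).
\]
Since $(\vc y,\vc w)$ is a mild solution, $h$ is continuous and hence bounded on every compact subinterval of $[0,\delta)$.

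Iterating this inequality $N$ times, with the convention $\psi^{*0}*g:=g$, gives $h\le\sum_{k=0}^{N-1}(\epsilon L_M)^k(\psi^{*k}*g)+(\epsilon L_M)^N(\psi^{*N}*h)$. The remainder vanishes as $N\to\infty$ on any $[0,T]\subset[0,\delta)$: since $\lambda_+\lambda_-=1$ one has $\int_0^\infty\psi=1$, hence $\norm{\psi^{*N}}_{L^1}\le\norm{\psi}_{L^1}^N=1$ by Young's convolution inequality, so $(\epsilon L_M)^N(\psi^{*N}*h)(\tau)\le(\epsilon L_M)^N\norm{h}_{L^\infty[0,T]}\to0$ because $\epsilon L_M<1$. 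This yields $h(\tau)\le\sum_{k\ge0}(\epsilon L_M)^k(\psi^{*k}*g)(\tau)$. To prove (i) I would unpack this series: splitting $g$ and using $\psi^{*k}*\psi=\psi^{*(k+1)}$, the $\abs{\vc w_0}$-part equals $\abs{\vc w_0}\sum_{j\ge1}(\epsilon L_M)^{j-1}\psi^{*j}(\tau)$ — the first term of \eqref{eq:envelope} — while the $\epsilon L_B$-part is $\epsilon L_B\sum_{k\ge0}(\epsilon L_M)^k(\psi^{*k}*(1-\phi))(\tau)$, whose $k=0$ term is $\epsilon L_B(1-\phi(\tau))$ and whose $k\ge1$ tail, estimated via $(\psi^{*k}*(1-\phi))(\tau)\le\int_0^\tau\psi^{*k}(s)\,\id s\le1$ and a geometric series, is at most $\frac{\epsilon^2L_ML_B}{1-\epsilon L_M}$. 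For the uniform convergence and the uniform bound claimed in (i), one notes $\psi^{*j}(\tau)\le1$ for all $j\ge1$, $\tau\ge0$ — for $j=1$ since $\psi$ decreases from $\psi(0)=1$, and for $j\ge2$ since $\psi^{*j}(\tau)\le\psi(0)\int_0^\tau\psi^{*(j-1)}(s)\,\id s\le\norm{\psi^{*(j-1)}}_{L^1}\le1$ — so the series is dominated termwise by $\sum_j(\epsilon L_M)^{j-1}=\frac1{1-\epsilon L_M}$ and a Weierstrass $M$-test applies.

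Part (ii) then follows from \eqref{eq:envelope} using $1-\phi\le1$, $\psi^{*j}\le1$, and the identity $\epsilon L_B+\frac{\epsilon^2L_ML_B}{1-\epsilon L_M}=\frac{\epsilon L_B}{1-\epsilon L_M}$; alternatively it admits a direct one-line proof by taking $\sup_{[0,\delta']}$ of both sides of the Volterra inequality and solving for the supremum. For (iii), with $\delta=\infty$, I would pass to the limit $\tau\to\infty$ in \eqref{eq:envelope}: here $\phi(\tau)\to0$ by the $\tau^{-1/2}$ decay implied by Theorem \ref{thm:homogeneous_solution}(2)--(3), and each $\psi^{*j}(\tau)\to0$ — for $j=1$ by the $\tau^{-3/2}$ rate, and for $j\ge2$ by induction, writing $\psi^{*j}=\psi*\psi^{*(j-1)}$ with $\psi\in L^1$ and $\psi^{*(j-1)}$ continuous and vanishing at infinity, so the convolution vanishes at infinity as well. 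Because the series $\sum_j(\epsilon L_M)^{j-1}\psi^{*j}$ converges uniformly in $\tau$, the limit passes inside the sum, that term drops out, and $\limsup_{\tau\to\infty}h(\tau)\le\epsilon L_B+\frac{\epsilon^2L_ML_B}{1-\epsilon L_M}=\frac{\epsilon L_B}{1-\epsilon L_M}$.

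The step I expect to be most delicate is making the convolution-kernel Gronwall iteration fully rigorous — confirming that the remainder $(\epsilon L_M)^N(\psi^{*N}*h)$ genuinely tends to zero, that the infinite convolution series converges, and that it can be reorganized into the exact form \eqref{eq:envelope}. This hinges on the normalization $\norm{\psi}_{L^1}=1$ (equivalently $\lambda_+\lambda_-=1$) and on the pointwise bounds and long-time decay of the iterated convolutions $\psi^{*j}$; each individual estimate is elementary, but keeping the bookkeeping of the two source terms consistent with the stated constants requires care.
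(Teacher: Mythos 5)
Your proposal is correct and follows essentially the same route as the paper: the paper reaches the identical Volterra inequality for $\abs{\vc w(\tau)}$ with kernel $\epsilon L_M\psi(\tau-s)$, applies a Gronwall-type lemma of Chu--Metcalf that produces the resolvent series $\sum_j(\epsilon L_M)^j\psi^{*j}$, and then uses the same normalization $\int_0^\infty\psi=1$ (from $\lambda_+\lambda_-=1$), the pointwise bound $\psi^{*j}\le1$, and the decay of $\psi^{*j}$ and $\phi$ to obtain parts (i)--(iii). The only difference is cosmetic: you re-derive the convolution-resolvent Gronwall estimate by explicit Picard iteration and control the remainder $(\epsilon L_M)^N(\psi^{*N}*h)$ via Young's inequality, whereas the paper cites the lemma; the bookkeeping of the two source terms, the geometric-series constants, and the interchange of the $\tau\to\infty$ limit with the uniformly convergent sum are all the same.
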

\begin{proof}
See Appendix \ref{app:properties}.
\end{proof}

In deriving the upper envelope \eqref{eq:envelope} and the subsequent upper bounds \eqref{eq:int_ineq} and \eqref{eq:asym_ub}, we have made several upper estimates. The natural question arising is how sharp these estimates are. In the following section, among other things, we show with a numerical example that these bounds are sharp by showing that they can be saturated. 

\subsection{Numerical verification}
We illustrate the results of Theorem \ref{thm:properties} with an example. For the fluid flow, we use the double gyre model of \citet{shadden05}. It is a two-dimensional velocity field with the stream function
\begin{equation}
\mathcal H(x,y,t)=A\sin(\pi f(x,t))\sin(\pi y),
\end{equation}
where
$$f(x,t)=\alpha\sin(\omega t)x^2+(1-2\alpha\sin(\omega t))x.$$
We let $A=0.1$, $\omega=\pi$ and $\alpha=0.01$. 
\begin{figure}[h!]
\centering
\subfloat[]{\includegraphics[width=.32\textwidth]{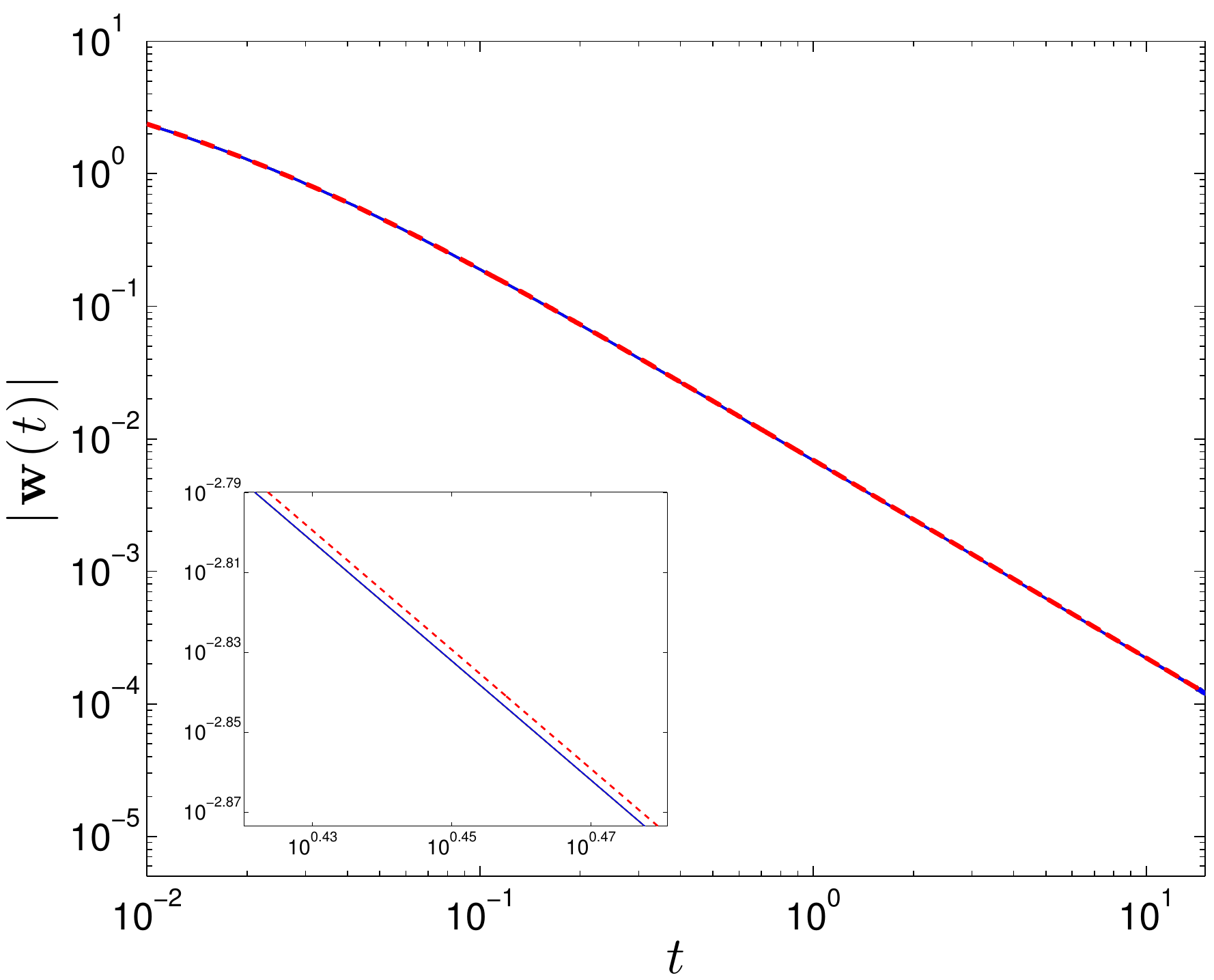}}
\subfloat[]{\includegraphics[width=.32\textwidth]{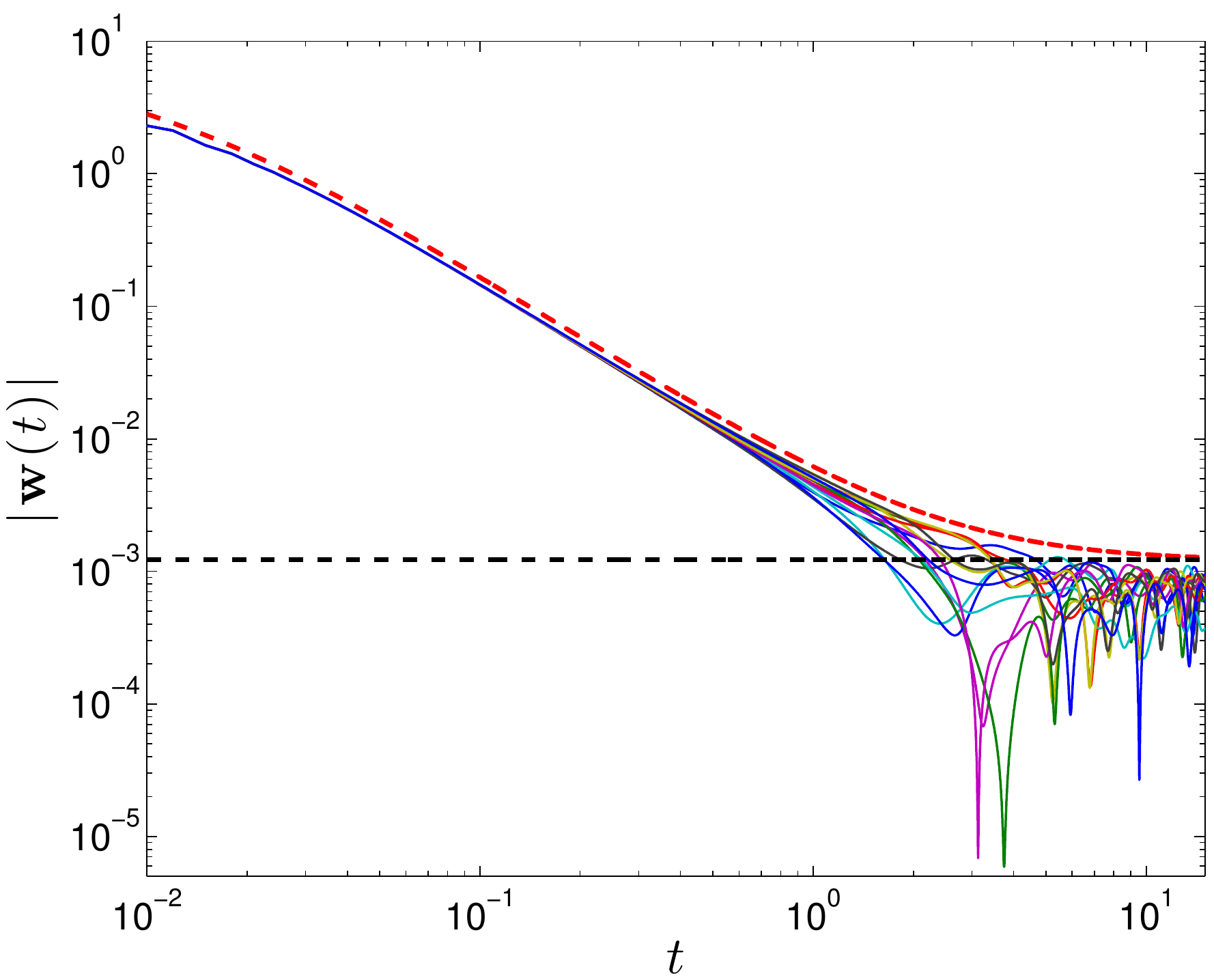}}
\subfloat[]{\includegraphics[width=.32\textwidth]{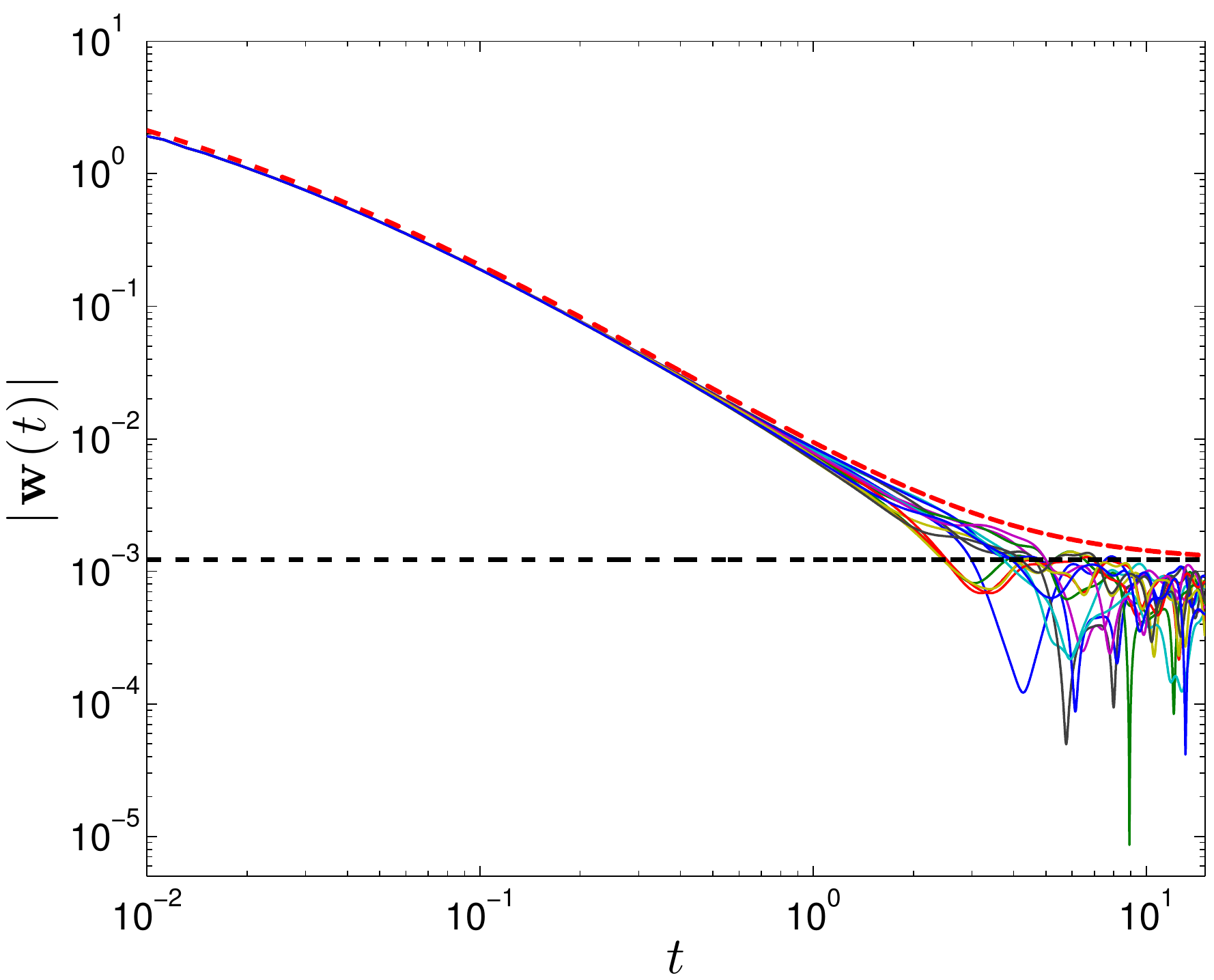}}
\caption{The decay of the relative velocity magnitude $|\vc w(t)|$ for $R=2/3$ (a), $R=1/3$ (b) and $R=1$ (c). The dashed red lines mark the analytic envelope from Theorem \ref{thm:properties} part (i). The dashed black lines mark the asymptotic upper bound of $|\vc w|$, i.e., $\epsilon L_B/(1-\epsilon L_M)$.}
\label{fig:dblgyre}
\end{figure}

The Hamiltonian $\mathcal H$ defines the velocity field $\vc u = (-\partial_y\mathcal H,\partial_x\mathcal H)^\top$ which we use to solve the initial value problem \eqref{eq:MR_system_1comp} using the numerical scheme developed in \cite{daitche2013advection}. We will neglect the Fax\'en corrections, such that $\vc A_{\vc u}=\vc u$, $\vc {B_u}=\left(\frac{3R}{2}-1\right)\frac{\mbox D\vc u}{\mbox Dt}$ and $\vc{M_u}=\bnabla\vc u$. 

For the parameters of the inertial particle, we let $\mbox{St}=R/100$ resulting in $\mu=100$ (or $\epsilon=0.01$). Three values of $R$ are considered here: $R=2/3$ (neutrally buoyant particle, $\rho_f=\rho_p$), $R=1/3$ (aerosol, $\rho_f<\rho_p$) and $R=1$ (bubble, $\rho_f>\rho_p$). In each case, we release $15$ trajectories with initial conditions $\vc y_0$ uniformly distributed in the domain $[0.2\times 1.8]\times[0.2,0.8]$ and identical initial relative velocities $\vc w_0=(10,10)^\top$. 

We take the most conservative choices of the upper bounds $L_B$ and $L_M$, i.e., $L_B=\|\vc{B_u}\|_\infty$ and $L_M=\|\vc{M_u}\|_\infty$. For the neutrally buoyant particle, i.e. $R=2/3$, $\vc {B_u}$ vanishes identically, resulting in $L_B=0$. The norm $\|\vc M_{\vc u}\|_\infty$ is, however, independent of $R$ and we have $L_M\simeq 1.4237$. Theorem \ref{thm:properties} therefore implies that for a neutrally buoyant particle, $|\vc w(t)|$ must decay to zero asymptotically which is in agreement with our numerical result (see Fig. \ref{fig:dblgyre}a). Physically, this implies that the inertial particle trajectory converges to a fluid trajectory. In the case of neutrally buoyant particles, the theoretical envelope and the numerical solutions almost coincide. A close-up view is shown in the inset of Fig. \ref{fig:dblgyre}a. 

Interestingly, for the neutrally buoyant particle, the evolution of the relative velocity magnitude $|\vc w|$ seems to be independent of the initial positions $\vc y_0$ as all $15$ curves coincide in Fig. \ref{fig:dblgyre}a.

For the bubble ($R=1$) and the aerosol ($R=1/3$), we have $L_B\simeq 0.1207$ and $L_M\simeq 1.4237$. The resulting envelope \eqref{eq:envelope} and the asymptotic upper bound $\epsilon L_B/(1-\epsilon L_M)$ are also shown (red and black dashed curves, respectively) which shows a perfect agreement with the numerical results. In plotting the envelopes, $\mathcal O(\epsilon^2)$-terms are neglected. The numerical solutions come very close to the analytic envelope of Theorem \ref{thm:properties} (part (i)), indicating the tightness of the estimates. 

The upper envelope \eqref{eq:envelope} depends on functions $\phi$ and $\psi$ which in turn depend on the parameter $\kappa=\sqrt{9R/2}$. The parameter $R$ is governed by the ratio between the particle density $\rho_p$ and the fluid density $\rho_f$. As this ratio varies the upper envelope also changes. Owing to the algebraic transient decay of $\phi$ and $\psi$ (see Fig. \ref{fig:phipsi}), however, the envelope exhibits an algebraic decay regardless of the value of $R$. Fig. \ref{fig:uppEnv_R} shows the behavior of the upper envelope (neglecting $\mathcal O(\epsilon^2)$-terms) for the double gyre parameters and various values of $R$. For neutrally buoyant particle ($R=2/3$) there is a monotonic decay with the algebraic rate $t^{-3/2}$. For other values of $R$ the envelope decays to the asymptotic upper bound. There is still a transient algebraic decay whose rate varies, depending on the parameter $R$, between $t^{-1.7}$ and $t^{-1.2}$. 
\begin{figure}
\centering
\includegraphics[width=.5\textwidth]{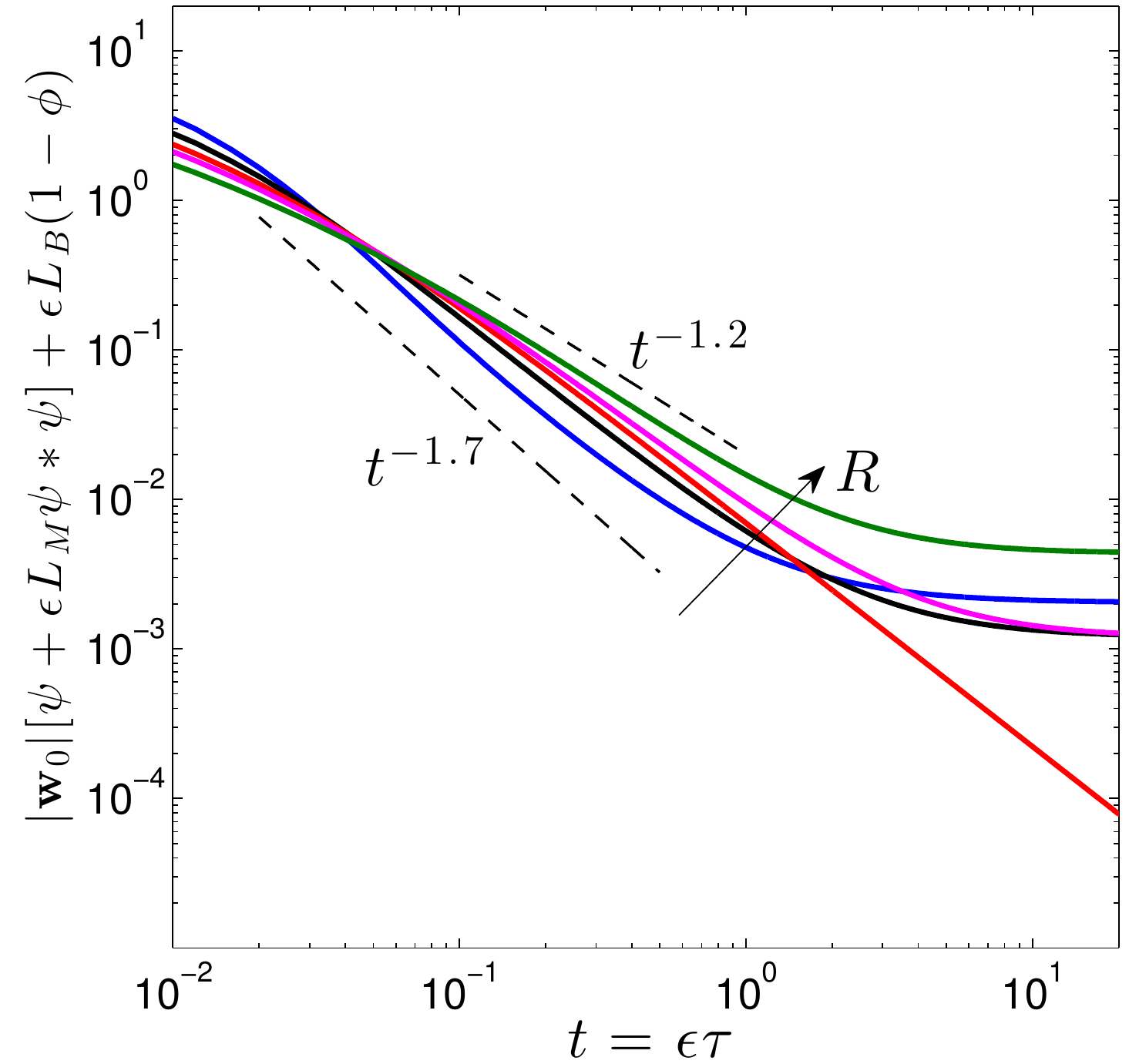}
\caption{The upper envelope \eqref{eq:envelope}, neglecting $\mathcal O(\epsilon^2)$-terms, for $R=1/10$ (blue), $R=1/3$ (black), $R=2/3$ (red), $R=1$ (magenta) and $R=19/10$ (green).}
\label{fig:uppEnv_R}
\end{figure}

\section{Global existence and uniqueness} \label{sec:global}
In this section, we prove the global existence and uniqueness of mild solutions of the Maxey--Riley equation \eqref{eq:MR_original}. In particular, we show that the equivalent reformulation \eqref{eq:MR_relaxation_nonhomogeneous} admits unique mild solutions for all times, i.e., the integral equations \eqref{eq:w_inteq} have a unique solution over $\mathbb R^+$.

The existence of a unique local solution follows from Theorem \ref{thm:local_e_u}. Specifically, there exists $\delta=\Delta/\epsilon>0$ such that the integral equation \eqref{eq:w_inteq} has a unique solution over the time interval $[0,\delta)$. Here, $\Delta$ is the same as the time window in Theorem \ref{thm:local_e_u} and the identity $\delta=\Delta/\epsilon$ follows from the rescaling $t=t_0+\epsilon\tau$ introduced in Section \ref{sec:rescale}. 

As discussed in Section \ref{sec:notDS}, the usual continuation methods used for ODEs does not apply to fractional order differential equations. Therefore, we construct a specific continuation method suitable for the MR equation, which is based on the continuation method presented in the work of Kou et al. \cite{kou2012existence} for a different class of fractional differential equations. We then show that this continuation can be repeated indefinitely to extend the solutions to the time interval $[0,\infty)$. Our approach can be summarized in the following steps.
\begin{description}
\item[Step 1.] Showing that the local solution of the integral equation \eqref{eq:w_inteq}, defined on $[0, \delta)$, is well defined at time $\tau = \delta$.
\item[Step 2.] Defining a suitable integral operator $\vc F$ over an appropriate Banach space whose fixed points extend the local solution of \eqref{eq:w_inteq} from $[0, \delta)$ to $[0, \delta + h)$, for a suitable constant $h > 0$.
\item[Step 3.] Showing that the operator $\vc F$ has at least one fixed point.
\item[Step 4.] Showing that this continuation is unique.
\item[Step 5.] Showing that one can repeat steps 1 to 4 indefinitely with the same continuation window $h$. That is the local solution of \eqref{eq:w_inteq} can be continued uniquely to $\mathbb R^+$.
\end{description}

The above steps prove the following global existence and uniqueness theorem.
\begin{theorem} \label{thm:global_mr}
Assume that (H1) and (H2) hold and $\epsilon<1/L_M$. The MR equation has unique, continuous, mild solutions. That is for any $(\vc y_0,\vc w_0)\in\mathbb R^{2n}$, there exists a unique, continuous function $(\vc y,\vc w):[0,\infty)\rightarrow\mathbb R^{2n}$ satisfying \eqref{eq:w_inteq} and $(\vc y(0),\vc w(0))=(\vc y_0,\vc w_0)$.
\end{theorem}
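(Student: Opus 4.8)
The plan is to execute the five-step continuation scheme announced above. The local solution on $[0,\delta)$ is supplied by Theorem~\ref{thm:local_e_u}, and the decisive external input is Theorem~\ref{thm:properties}(ii): it bounds $\vc w$ on the interval of existence by $W_0 := (\abs{\vc w_0}+\epsilon L_B)/(1-\epsilon L_M)$, a constant \emph{independent of time}, which is what eventually lets Step~5 reuse a single continuation window. The conceptual hurdle, emphasised in Section~\ref{sec:notDS}, is that continuing past $\tau=\delta$ is \emph{not} a fresh initial-value problem: because of the memory term, the history $\vc w|_{[0,\delta]}$ reappears for $\tau>\delta$ as a prescribed forcing. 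Accordingly, \textbf{Step~1} is to extend the local mild solution continuously to the closed interval $[0,\delta]$: since $\abs{\vc w}\le W_0$ on $[0,\delta)$, the $\vc y$-equation in \eqref{eq:w_inteq} makes $\vc y$ Lipschitz there; and since $\psi$ is bounded ($0<\psi\le 1$, Theorem~\ref{thm:homogeneous_solution}), locally integrable and continuous on $(0,\infty)$, while $\int_0^\tau\psi = 1-\phi(\tau)\le 1$, dominated convergence together with $L^1$-continuity of translation lets one pass to the limit $\tau\to\delta^-$ in both integral equations, thereby defining $(\vc y(\delta),\vc w(\delta))$.

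For \textbf{Step~2}, fix $h>0$ (chosen below) and, splitting each integral in \eqref{eq:w_inteq} at $s=\delta$, define on the closed ball $\X = X_K^{\delta,h}\subset C([\delta,\delta+h];\mathbb R^{2n})$ the operator $\vc F=(\vc F_1,\vc F_2)$ in which the pieces over $[0,\delta]$ are \emph{known} continuous functions of $\tau$, built from the solution already constructed, and the pieces over $[\delta,\tau]$ form the unknown part; note $\vc F_1(f)(\delta)=\vc y(\delta)$ and $\vc F_2(f)(\delta)=\vc w(\delta)$ automatically, so any fixed point of $\vc F$ glues onto the existing solution to give a continuous solution of \eqref{eq:w_inteq} on $[0,\delta+h]$. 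For \textbf{Step~3} I would verify the hypotheses of Schauder's fixed point theorem on $\X$: (a) $\vc F(\X)\subseteq\X$ --- here one uses the identity $\abs{\vc w_0}+\epsilon L_M W_0+\epsilon L_B = W_0$ (so the history contribution exactly restores $W_0$), takes $h$ small enough that $\epsilon L_M(1-\phi(h))<1$, and then takes $K$ large enough (depending on $\abs{\vc y(\delta)}$) to close the self-map inequalities for $\vc y$ and $\vc w$; (b) $\vc F$ is continuous on $\X$, by dominated convergence; (c) $\vc F(\X)$ is precompact by the Arzel\`a--Ascoli theorem, the equicontinuity following from boundedness of $\psi$, $L^1$-continuity of its translations, and $\int_0^{h}\psi = 1-\phi(h)\to 0$. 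Schauder then yields a continuation, and \textbf{Step~4} (uniqueness) follows by observing that two solutions on $[0,\delta+h]$ agree on $[0,\delta]$ (local uniqueness plus Step~1) while the difference of their restrictions to $[\delta,\delta+h]$ obeys a coupled linear system of integral inequalities with kernels bounded by $1$, so Gr\"onwall's inequality forces it to vanish. (For small $h$ one could instead show $\vc F$ is a contraction and merge Steps~3--4; I would keep the Schauder route to mirror \cite{kou2012existence} and to keep the compactness and Gr\"onwall arguments separate.)

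For \textbf{Step~5}, the key point is that $h$ and all estimates in Steps~2--4 depend only on $\epsilon$, the \emph{global} constants $L_A,L_B,L_M,L_c$ of \eqref{eq:assumption1}--\eqref{eq:assumption2}, and $W_0$ --- never on how far along in time we are; assumptions (H1)--(H2) matter precisely because they hold uniformly over $\mathcal D\times\mathbb R^+$, so the Lipschitz bounds never see how large $\vc y$ has grown. Moreover, the a priori bound $\abs{\vc w}\le W_0$ of Theorem~\ref{thm:properties}(ii) persists on the whole extended interval of existence at each stage, so the continuation from $\delta+h$ onward uses the same $h$; only $K$ is enlarged each time, harmlessly. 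Iterating gives a continuous solution on $[0,\delta+nh)$ for every $n$, hence on $[0,\infty)$, and global uniqueness follows by induction from Step~4. I expect the main obstacle to be not any single estimate --- all are tame because $\psi$ is bounded --- but the correct organisation of Step~2: treating the history integral as a bona fide forcing term and choosing the fixed-point set so that the $\vc w$-radius, and hence $h$, is fixed once and for all. This is exactly where Theorem~\ref{thm:properties}(ii) is indispensable and where a naive ``restart the ODE'' continuation breaks down.
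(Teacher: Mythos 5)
Your proposal follows the paper's five-step scheme essentially step for step: Lemma~\ref{lemma:WD} (paper's Step~1) extends the local solution to $\tau=\delta$ using uniform continuity of $\psi,\phi$; the operator $\vc F$ in \eqref{eq:Fphi} is exactly your split-at-$\delta$ operator with the history integral absorbed into the known forcing $\vect\Phi_0$; Proposition~\ref{lem:existence_fp} applies Schauder via Arzel\`a--Ascoli; Section~\ref{sec:unique} proves uniqueness by a one-pass contraction estimate (rather than a Gr\"onwall iteration, though the two are interchangeable here); and the final paragraph of Section~\ref{sec:global} iterates with a fixed window $h$ from \eqref{eq:h} that depends only on $\epsilon$, the global constants $L_A,L_B,L_M,L_c$ and $\abs{\vc w_0}$, never on $\delta$ or $K$. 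Your observation that $\abs{\vc w_0}+\epsilon L_M W_0+\epsilon L_B=W_0$ is correct and a slightly sharper way to organize the self-map bound than the paper's generic $K'=\norm{\vect\Phi_0}_\infty$.

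The one place you go slightly wrong is the parenthetical suggesting that for small $h$ one could ``instead show $\vc F$ is a contraction and merge Steps~3--4.'' The paper's Remark after Section~\ref{sec:unique} explains precisely why this fails: the contraction estimate needs $\norm{\betta_i}_\infty$ bounded by the $\delta$-\emph{independent} constant $(\abs{\vc w_0}+\epsilon L_B)/(1-\epsilon L_M)$ from \eqref{eq:int_ineq}, and that bound holds only because $(\vc y_i,\vc w_i)$ are bona fide mild solutions. For arbitrary elements of the ball $X_K^{\delta,h}$ you would have to use $\norm{\betta}_\infty\le K$ instead, and since $K$ must grow with $\abs{\vc y(\delta)}$ across iterations, the Banach-contraction window $h$ would shrink and global continuation would not follow. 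So Schauder-then-uniqueness is not merely a stylistic choice to mirror \cite{kou2012existence}; it is forced by the structure of the a priori estimate. Apart from this point, your proof is the paper's proof.
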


\subsection{Continuation of the local solution}\label{sec:cont}
Let's denote the local solution of the MR equation, whose existence and uniqueness is guaranteed by Theorem \ref{thm:local_e_u}, by $\vc z_{loc}=(\vc y_{loc},\vc w_{loc})$. We first begin by showing that this local solution defined on $[0, \delta)$ is well defined at $\tau = \delta$.

\begin{lemma} \label{lemma:WD}
The local solution $\vect{z}_{loc}:[0,\delta)\rightarrow \mathbb R^{2n}$ of the MR equation is well-defined at $\tau = \delta$ and the limit $\lim_{\tau \to \delta^-} \vect{z}_{loc}(\tau)$ is given by
\begin{equation} \label{eq:MR_LIMIT}
\vect{z}_{loc}(\delta) = \left( 
  \begin{array}{lr}
    \vect{y_0} + \epsilon \int_{0}^{\delta} \! \vect{w}_{loc}(s) + \vect{A_u}(\vect{y}_{loc}(s),s) \ \id s \\
    \psi(\delta)\vect{w_0} + \epsilon \int_{0}^{\delta} \! \psi(\delta - s)\left[-\vect{M_u}(\vect{y}_{loc}(s), s)\vect{w}_{loc}(s) + \vect{B_u}(\vect{y}_{loc}(s), s)\right] \ \id s
  \end{array}
\right).
\end{equation}
\end{lemma}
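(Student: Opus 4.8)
The plan is to show that the local solution $\vc z_{loc}$, which is defined and unique on the open interval $[0,\delta)$ and satisfies the integral equation \eqref{eq:w_inteq}, extends continuously to the closed interval $[0,\delta]$, with the value at $\tau=\delta$ forced by the integral equation itself. The key observation is that the right-hand sides of both components of \eqref{eq:w_inteq} are integrals of functions that remain bounded and integrable as the upper limit approaches $\delta$, so the limit $\lim_{\tau\to\delta^-}\vc z_{loc}(\tau)$ exists; once that is established, continuity of the integral expressions in $\tau$ (including at $\tau=\delta$) gives formula \eqref{eq:MR_LIMIT}.

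Concretely, I would first establish an a priori bound on $\abs{\vc w_{loc}(\tau)}$ on $[0,\delta)$. This is exactly part (ii) of Theorem \ref{thm:properties}: since (H1) holds and $\epsilon<1/L_M$, we have $\sup_{0\le\tau<\delta}\abs{\vc w_{loc}(\tau)}\le (\abs{\vc w_0}+\epsilon L_B)/(1-\epsilon L_M)=:W$. Call this bound $W$. Then $\vc w_{loc}$ is bounded on $[0,\delta)$, and since $\vc y_{loc}(\tau)=\vc y_0+\epsilon\int_0^\tau(\vc w_{loc}(s)+\vc A_{\vc u}(\vc y_{loc}(s),s))\,\id s$ with $\norm{\vc A_{\vc u}}_\infty\le L_A$, the map $\tau\mapsto \vc y_{loc}(\tau)$ is Lipschitz on $[0,\delta)$ (with constant $\epsilon(W+L_A)$) and hence extends continuously to $\tau=\delta$, with $\vc y_{loc}(\delta)=\vc y_0+\epsilon\int_0^\delta(\vc w_{loc}(s)+\vc A_{\vc u}(\vc y_{loc}(s),s))\,\id s$, the integrand being bounded and the integral over the finite interval $[0,\delta]$ being well defined.

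For the $\vc w$-component the situation is slightly more delicate because of the kernel $\psi(\tau-s)$, but $\psi$ is bounded (by Theorem \ref{thm:homogeneous_solution}, $\psi$ is completely monotone decreasing with $\psi(0)=1$, so $0\le\psi\le1$) and the integrand $\psi(\tau-s)[-\vc M_{\vc u}(\vc y_{loc}(s),s)\vc w_{loc}(s)+\vc B_{\vc u}(\vc y_{loc}(s),s)]$ is bounded in absolute value by $L_M W+L_B$ uniformly on $\{0\le s\le\tau<\delta\}$. To pass to the limit $\tau\to\delta^-$ in $\int_0^\tau\psi(\tau-s)[\cdots]\,\id s$, I would write the difference between the value at $\tau$ and the candidate value at $\delta$, split it as $\int_0^\tau[\psi(\tau-s)-\psi(\delta-s)][\cdots]\,\id s + \int_\tau^\delta\psi(\delta-s)[\cdots]\,\id s$; the second term is $O(\delta-\tau)$ by the boundedness of $\psi$ and the integrand, and the first term tends to zero by dominated convergence together with the continuity (indeed $L^1$-continuity of translates) of $\psi$ — here one uses that $\psi\in L^1_{loc}$ even though $\psi'$ has an integrable singularity at the origin; alternatively, since $\psi=-\phi'$ with $\phi$ continuous and $\phi(0)=1$, one can integrate by parts to recast the convolution in terms of $\phi$, whose continuity makes the limit transparent. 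Likewise $\psi(\tau)\vc w_0\to\psi(\delta)\vc w_0$ by continuity of $\psi$ on $(0,\infty)$ (and $\delta>0$). Combining, $\lim_{\tau\to\delta^-}\vc w_{loc}(\tau)$ exists and equals the second component of \eqref{eq:MR_LIMIT}.

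The main obstacle is the regularity of the convolution kernel near zero: $\psi$ itself is bounded and continuous on $[0,\infty)$, but a naive estimate of $\psi(\tau-s)-\psi(\delta-s)$ would invoke $\psi'$, which blows up like $s^{-1/2}$ as $s\to0^+$ (consistent with the fractional-derivative structure). The clean way around this is to avoid differentiating $\psi$ inside the convolution and instead rely on dominated convergence with the uniform $L^\infty$ bound $\abs{\psi}\le 1$ and the pointwise convergence $\psi(\tau-s)\to\psi(\delta-s)$ for each fixed $s\in[0,\delta)$, which holds by continuity of $\psi$; this is precisely where Theorem \ref{thm:properties}(ii) is essential, since it supplies the uniform bound $W$ on $\vc w_{loc}$ needed to dominate the integrand. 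Everything else is a routine continuity argument on a compact interval.
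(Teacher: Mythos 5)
Your proof is correct, and it reaches the conclusion by a genuinely different route from the paper. The paper works with arbitrary $\tau_1,\tau_2\in[0,\delta)$ and bounds $\abs{\vc z_{loc}(\tau_2)-\vc z_{loc}(\tau_1)}$ directly, crucially by rewriting the $\psi$-difference integrals via the primitive $\phi$ of $-\psi$ (from Theorem \ref{thm:homogeneous_solution}), so that $\int_0^{\tau_1}\bigl(\psi(\tau_1-s)-\psi(\tau_2-s)\bigr)\,\id s$ becomes a sum of explicit $\phi$-differences; uniform continuity of $\phi$ and $\psi$ on $[0,\infty)$ then shows $\vc z_{loc}$ is uniformly continuous on $[0,\delta)$, the Cauchy criterion gives existence of the limit, and the limit is identified with \eqref{eq:MR_LIMIT} by continuity of the integral operator. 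You instead compare $\vc w_{loc}(\tau)$ directly with the \emph{candidate} limit and pass to the limit by dominated convergence, using $0\le\psi\le 1$ and the a priori bound $W$ from Theorem \ref{thm:properties}(ii) to supply the dominating envelope. Both arguments rely on Theorem \ref{thm:properties}(ii) in exactly the same way. The trade-off is that the paper's $\phi$-based estimate is quantitative and is recycled verbatim in Appendix \ref{app:lemma_continuous} to establish equicontinuity of the range of $\vc F$, whereas your DCT argument is shorter and avoids touching $\phi$ but gives no modulus of continuity. You correctly diagnose, and correctly sidestep, the singularity of $\psi'$ at $0$, which is also precisely why the paper integrates using $\phi$ rather than estimating the kernel by its derivative. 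One trivial slip: in your splitting of the integral difference, the second term should carry a minus sign, i.e.\ $\int_0^\tau\bigl[\psi(\tau-s)-\psi(\delta-s)\bigr][\cdots]\,\id s - \int_\tau^\delta\psi(\delta-s)[\cdots]\,\id s$; this does not affect your bound since you are only estimating absolute values.
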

\begin{proof}
See Appendix \ref{app:lemma_WD}.
\end{proof}

Let $(\vc y_{loc},\vc w_{loc}):[0,\delta)\rightarrow\mathbb R^{2n}$ be the local solution of \eqref{eq:w_inteq} whose existence and uniqueness is guaranteed by Theorem \ref{thm:local_e_u}. Define
\begin{subequations}
\begin{equation}
\vc y(\tau) = \indf_{[0,\delta)}(\tau)\vc y_{loc}(\tau)+\indf_{[\delta,\delta+h)}(\tau) \bxi(\tau),
\end{equation} 
\begin{equation}
\vc w(\tau) = \indf_{[0,\delta)}(\tau)\vc w_{loc}(\tau)+\indf_{[\delta,\delta+h)}(\tau) \betta(\tau),
\end{equation}
\label{eq:ext}
\end{subequations}
where $\indf_A:\mathbb R\rightarrow\{0,1\}$ is the indicator function of the set $A\subset\mathbb R$. Note that for $\tau\in[0,\delta)$, $(\vc y,\vc w)$ coincides with the local solution $(\vc y_{loc},\vc w_{loc})$. Assuming $(\vc y,\vc w)$ is a continuation of this local solution to $[0,\delta+h)$, upon substitution in \eqref{eq:w_inteq}, we have
\begin{equation}
\begin{split}
\betta(\tau) =& \vect{y_0} + \epsilon \int_{0}^{\delta} \! \vect{w}_{loc}(s) + \vect{A_u}(\vect{y}_{loc}(s), s) \ \id s + \epsilon \int_{\delta}^{\tau} \! \betta(s) + \vect{A_u}(\bxi(s), s) \ \id s, \\
\bxi(\tau) =& \psi(\tau)\vect{w_0} + \epsilon \int_{0}^{\delta} \! \psi(\tau - s)\left[-\vect{M_u}(\vect{y}_{loc}(s), s)\vect{w}_{loc}(s) + \vect{B_u}(\vect{y}_{loc}(s), s)\right] \ \id s  \\ 
          &+ \epsilon \int_{\delta}^{\tau} \! \psi(\tau - s)\left[-\vect{M_u}(\bxi(s), s)\betta(s) + \vect{B_u}(\bxi(s), s)\right] \ \id s,
\end{split}
\label{eq:MR_EXTENSION}
\end{equation}
for $\tau\in[\delta,\delta+h)$. 

Therefore, $(\vc y,\vc w)$ solves the integral equation \eqref{eq:w_inteq} and hence is a mild solution of the MR equation if and only if the integral equation \eqref{eq:MR_EXTENSION} has a solution. To show that such a solution exists, we solve the following fixed point problem. Let $\vect{\Phi} = (\bxi,\betta) \in X_K^{\delta,h}$.  Define the operator $\vc F : X_K^{\delta,h} \rightarrow C([\delta, \delta + h); \mathbb{R}^{2n})$ by
\begin{equation}
\label{eq:Fphi}
\left(\vc F\vect{\Phi}\right)(\tau) = \vect{\Phi_0}\left(\tau\right) + \left(
  \begin{array}{lr}
    \epsilon \int_{\delta}^{\tau} \! \betta(s) + \vect{A_u}(\bxi(s), s) \ \id s \\
    \epsilon \int_{\delta}^{\tau} \! \psi(\tau - s)\left[-\vect{M_u}(\bxi(s), s)\betta(s) + \vect{B_u}(\bxi(s), s)\right] \ \id s
  \end{array}
\right),
\end{equation}
where
\begin{equation}
\label{eq:phi0}
\vect{\Phi_0}\left(\tau\right) = \left(
  \begin{array}{lr}
    \vect{y_0} + \epsilon \int_{0}^{\delta} \! \vect{w}_{loc}(s) + \vect{A_u}(\vect{y}_{loc}(s), s) \ \id s \\
    \psi(\tau)\vect{w_0} + \epsilon \int_{0}^{\delta} \! \psi(\tau - s)\left[-\vect{M_u}(\vect{y}_{loc}(s), s)\vect{w}_{loc}(s) + \vect{B_u}(\vect{y}_{loc}(s), s)\right] \ \id s
  \end{array}
\right).
\end{equation}
Note that $\vect\Phi_0$ depends only on the local solution $(\vect{y}_{loc}, \vect{w}_{loc})$ of the Maxey--Riley equation and hence is independent of $\vect\Phi$. We show that the operator $\vc F$ maps $X_K^{\delta,h}$ to itself (with $K$ and $h$ to be determined) and has a unique fixed point. 


\subsection{Existence of the continuation}\label{sec:exist}
\begin{prop}\label{lem:existence_fp}
Assume that (H1) holds. There exist constants $h,K>0$ such that the operator $\vc F$ defined in \eqref{eq:Fphi} maps $X_K^{\delta,h}$ to itself and has at least one fixed point.
\end{prop}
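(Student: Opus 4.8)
## Proof Proposal for Proposition 1

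The plan is to prove this via the Schauder fixed point theorem, which requires three things: that $\vc F$ maps a suitable closed, bounded, convex set into itself; that the image is relatively compact (equicontinuity plus uniform boundedness, via Arzel\`a--Ascoli); and that $\vc F$ is continuous. The natural candidate set is $X_K^{\delta,h}$, which is closed, bounded and convex in the Banach space $C([\delta,\delta+h];\mathbb R^{2n})$, so the structural hypotheses of Schauder are automatic once we fix $K$ and $h$ appropriately.

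The first and most important step is the \emph{self-mapping} estimate. Given $\vc\Phi=(\bxi,\betta)\in X_K^{\delta,h}$, I would bound $\norm{\vc F\vc\Phi}_\infty$ by splitting it as $\norm{\vc\Phi_0}_\infty$ plus the contribution of the two integrals over $[\delta,\tau]$. For the $\bxi$-component of $\vc\Phi_0$, the integral over $[0,\delta]$ is a fixed finite number bounded using $\norm{\vc w_{loc}}_\infty$ (finite by Theorem \ref{thm:properties}(ii) applied on $[0,\delta)$, or by Lemma \ref{lemma:WD} which shows the local solution extends continuously to $\tau=\delta$) together with $\norm{\vc A_u}_\infty\le L_A$ from (H1); for the $\betta$-component of $\vc\Phi_0$ one uses $0\le\psi\le 1$ from Theorem \ref{thm:homogeneous_solution}(4)--(5), the bound $\norm{\vc M_u}_\infty\le L_M$, $\norm{\vc B_u}_\infty\le L_B$, and $\int_0^\delta\psi(\tau-s)\,\id s\le\delta$. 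Call the resulting constant $K_0:=\norm{\vc\Phi_0}_\infty$, which depends only on the local solution and the fixed parameters. For the remaining integrals over $[\delta,\tau]$, the $\bxi$-tail is bounded by $\epsilon h(K+L_A)$, and the $\betta$-tail by $\epsilon\bigl(\int_\delta^\tau\psi(\tau-s)\,\id s\bigr)(L_M K+L_B)\le\epsilon h(L_M K+L_B)$, again using $\psi\le 1$. Hence $\norm{\vc F\vc\Phi}_\infty\le K_0+\epsilon h(K+L_A+L_M K+L_B)$. Now choose $K:=2K_0+1$ (say), and then choose $h>0$ small enough that $\epsilon h(K+L_A+L_M K+L_B)\le K_0+1\le K$; this makes $\vc F(X_K^{\delta,h})\subseteq X_K^{\delta,h}$.

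Next, \textbf{equicontinuity}: for $\delta\le\tau_1<\tau_2\le\delta+h$ and any $\vc\Phi\in X_K^{\delta,h}$, I would estimate $\abs{(\vc F\vc\Phi)(\tau_2)-(\vc F\vc\Phi)(\tau_1)}$. The $\bxi$-component difference of $\vc\Phi_0$ is zero (the first component of $\vc\Phi_0$ is constant in $\tau$), and its $\betta$-component difference involves $\abs{\psi(\tau_2)-\psi(\tau_1)}\abs{\vc w_0}$ plus $\int_0^\delta\abs{\psi(\tau_2-s)-\psi(\tau_1-s)}\,(L_M\norm{\vc w_{loc}}_\infty+L_B)\,\id s$, both controlled by uniform continuity of $\psi$ on the compact set $[\,\cdot\,]$ (note $\psi$ is continuous on $[0,\infty)$ and smooth on $(0,\infty)$, and here arguments stay bounded away from $0$ since $\tau\ge\delta>0$). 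The tail integrals over $[\delta,\tau]$ contribute terms bounded by $\epsilon(K+L_A)\abs{\tau_2-\tau_1}$ and $\epsilon(L_M K+L_B)\bigl(\int_{\tau_1}^{\tau_2}\psi(\tau_2-s)\,\id s+\int_\delta^{\tau_1}\abs{\psi(\tau_2-s)-\psi(\tau_1-s)}\,\id s\bigr)$, all of which go to $0$ as $\tau_2-\tau_1\to 0$ uniformly in $\vc\Phi$. Combined with uniform boundedness (established above), Arzel\`a--Ascoli gives that $\vc F(X_K^{\delta,h})$ is relatively compact. Finally, \textbf{continuity of $\vc F$}: if $\vc\Phi_n\to\vc\Phi$ in sup-norm, then using the Lipschitz bounds \eqref{eq:assumption2} — actually only \eqref{eq:assumption1}-type boundedness plus continuity of $\vc A_u,\vc B_u,\vc M_u$ suffices here, but the Lipschitz property from (H2) makes it cleanest — one gets $\norm{\vc F\vc\Phi_n-\vc F\vc\Phi}_\infty\le\epsilon h\bigl[(1+L_c K+L_M+L_c)\norm{\vc\Phi_n-\vc\Phi}_\infty\bigr]\to 0$, actually under (H1) alone one invokes dominated convergence on the integrands. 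Schauder's theorem then yields a fixed point in $X_K^{\delta,h}$.

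The main obstacle is the first step — ensuring the self-mapping property holds with a \emph{fixed} window $h$ that does not shrink as the continuation is iterated (this is what Step 5 of the outline needs). The delicate point is that $K_0=\norm{\vc\Phi_0}_\infty$ must be controlled uniformly across continuation steps: naively it grows because $\int_0^\delta$ becomes $\int_0^{\delta+h}$, etc. The resolution is that $K_0$ is bounded by the \emph{a priori} envelope of Theorem \ref{thm:properties}(ii), namely $(\abs{\vc w_0}+\epsilon L_B)/(1-\epsilon L_M)$ for the $\vc w$-part (which is uniform in time), while the $\vc y$-part grows only linearly in the elapsed time; since in this Proposition we only need existence of \emph{some} $h,K$ at the first continuation step, I would state the estimate in a form that makes the $h$-choice depend only on $\epsilon$, $L_A$, $L_B$, $L_M$ and an upper bound for $\sup_{[0,\delta)}\abs{\vc w_{loc}}$ — all of which are bounded independently of $\delta$ by Theorem \ref{thm:properties} — and defer the iteration bookkeeping to the proof of Step 5. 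The condition $\epsilon<1/L_M$ is what keeps the relevant denominators positive; it is not strictly needed for this existence step but is carried along for consistency.
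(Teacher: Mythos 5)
Your proposal is correct and follows the paper's proof essentially step for step: the self-mapping estimate on $X_K^{\delta,h}$ (your choice of $K$ and $h$ differs only cosmetically from the paper's), equicontinuity plus Arzel\`a--Ascoli for compactness, continuity of $\vc F$, and Schauder's fixed point theorem. One small slip worth flagging: the parenthetical claim that the arguments of $\psi$ stay bounded away from $0$ because $\tau\ge\delta>0$ is false --- $\tau-s$ can be arbitrarily close to $0$ both in the tail integrals over $[\delta,\tau]$ and in the $\vect\Phi_0$ part as $\tau_1\to\delta$ --- but this does no harm, since $\psi$ is uniformly continuous on all of $[0,\infty)$ (it is continuous, monotone decreasing, and tends to zero), and this is exactly what the paper's Appendix \ref{app:lemma_continuous} invokes.
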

\begin{proof}
For any $h,K>0$ and $\vect\Phi\in\X$ the function $\vc F\vect\Phi:[\delta,\delta+h)\rightarrow \mathbb R^{2n}$ is clearly continuous, i.e. $\vc F\vect\Phi\in C([\delta,\delta+h);\mathbb R^{2n}$. We choose $h,K>0$ such that $\vc F\vect\Phi\in\X$, i.e., $\|\vc F\vect\Phi\|_\infty\leq K$. To this end, note that for any $h>0$ and $\tau\in[\delta,\delta+h)$, we have
\begin{equation*}
\begin{split}
\abs{(\vc F\vect{\Phi})(\tau)} 	 \leq & \abs{\vect{\Phi_0}(\tau)} + \epsilon\int_{\delta}^{\delta + h} \! \abs{\betta(s)} + \abs{\vect{A_u}(\bxi(s), s)} \ \id s \\
										& +     \epsilon\int_{\delta}^{\delta + h} \! \psi(\tau-s)\left[\abs{\vect{M_u}(\bxi(s), s)\betta(s)} + \abs{\vect{B_u}(\bxi(s), s)}\right] \ \id s.
\end{split}
\end{equation*}
Take the supremum over $\tau\in[\delta,\delta+h)$ and use the bounds on $\norm{\vect{M_u}}_\infty$, $\norm{\vect{B_u}}_\infty$, $\norm{\vect{A_u}}_\infty$, $\abs{\vect{w}(\tau)}$, $\norm{\psi}_\infty$, and $\norm{\betta}_\infty$ to get 
\begin{equation*}
\begin{split}
\norm{\vc F\vect{\Phi}}_\infty &\leq \norm{\vect{\Phi_0}}_\infty + \epsilon\int_{\delta}^{\delta + h} \! \norm{\betta}_\infty + \norm{\vect{A_u}}_\infty \ \id s \\
                                   & + \epsilon\int_{\delta}^{\delta + h} \! \big[\norm{\vect{M_u}}_{\infty}\norm{\betta}_{\infty} + \norm{\vect{B_u}}_\infty\big] \ \id s \\
																	& \leq \norm{\vect{\Phi_0}}_\infty + \epsilon h\left(K + L_A\right) + \epsilon h \left(L_M K + L_B\right).
\end{split}
\end{equation*}
For
\benn
h \leq \frac{1}{\epsilon\left(L_M + 1\right)},
\eenn
we have
\benn
\norm{\vc F\vect{\Phi}}_\infty \leq \norm{\vect{\Phi_0}}_\infty + \frac{K}{2} + \frac{L_B + L_A}{2\left(L_M + 1\right)}.
\eenn
Since $\vect\Phi_0:[0,\infty)\rightarrow\mathbb R^{2n}$ is a continuous function, there exists $0<K'<\infty$ such that
\begin{equation*}
\norm{\vect{\Phi_0}}_\infty:=\sup_{\delta\leq\tau<\delta+h}|\vect\Phi_0(\tau)| = K'
\end{equation*}

Choosing
\benn
K \geq \left[K'  + \frac{L_B + L_A}{2\left(L_M + 1\right)}\right],
\eenn
we have $\norm{\vc F\vect{\Phi}}_\infty \leq K$.

In short, with any $h,K>0$ satisfying
\be
h \leq \frac{1}{\epsilon\left(L_M + 1\right)},\quad K = K' + \frac{L_B + L_A}{2\left(L_M + 1\right)},
\label{eq:hKbound}
\ee
the operator $\vc F$ maps $\X$ to itself.

To prove the existence of a fixed point for the operator $\vc F:X_K^{\delta,h}\rightarrow X_K^{\delta,h}$, we use Schauder's fixed point theorem:

\begin{theorem}[Schauder's Fixed Point Theorem] \label{thm:schauder}
Let $X$ be a real Banach space, $D \subset X$ nonempty, closed, bounded, and convex. Let $\mathcal F : D \rightarrow D$ be a continuous, compact operator. Then $\mathcal F$ has a fixed point. 
\end{theorem}

The space $\X$ is nonempty, closed, bounded and convex. Therefore, to apply Schauder's fixed point theorem, it remains to show that $\vc F:\X\rightarrow\X$ is continuous and compact. For this, we need the following lemma. 
\begin{lemma} \label{lem:continuous} 
The operator $\vc F$ is continuous and maps $\X$ to a family of equi-continuous functions in $\X$.
\end{lemma}
\begin{proof}
The proof of the continuity of $\vc F:\X\rightarrow \X$ is straightforward and is therefore omitted here. We prove the equicontinuity of its range in Appendix \ref{app:lemma_continuous}.
\end{proof}

By Arzela-Ascoli theorem, therefore, the operator $\vc F:\X\rightarrow\X$ is compact. Hence, $\vc F$ satisfies all the conditions of Schauder's theorem and has at least one fixed point. This concludes the proof of Proposition \ref{lem:existence_fp}.
\end{proof}

\subsection{Uniqueness of the continuation}\label{sec:unique}
We now show that the continuation constructed in sections \ref{sec:cont} and \ref{sec:exist} is unique. 
\begin{prop}
Assume that (H1) and (H2) hold and $\epsilon<1/L_M$. There exists $h>0$ such that the continuation \eqref{eq:ext} of the local solution of the MR equation is unique. 
\end{prop}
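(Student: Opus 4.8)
The plan is to prove uniqueness by a standard fixed-point contraction argument, exploiting the assumptions (H1), (H2) and $\epsilon<1/L_M$. Suppose $(\vc y^{(1)},\vc w^{(1)})$ and $(\vc y^{(2)},\vc w^{(2)})$ are two continuations of the local solution $(\vc y_{loc},\vc w_{loc})$ to $[0,\delta+h)$. By construction \eqref{eq:ext}, both coincide with the local solution on $[0,\delta)$, so it suffices to show that the corresponding pairs $\vect\Phi^{(i)}=(\bxi^{(i)},\betta^{(i)})$ on $[\delta,\delta+h)$ agree. Both satisfy the integral equation \eqref{eq:MR_EXTENSION}, equivalently both are fixed points of the operator $\vc F$ from \eqref{eq:Fphi}. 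Since $\vect\Phi_0$ is the same for both (it depends only on the local solution), subtracting the two fixed-point equations and using $\vc F\vect\Phi^{(1)}-\vc F\vect\Phi^{(2)}$ gives an estimate purely in terms of the differences $\bxi^{(1)}-\bxi^{(2)}$ and $\betta^{(1)}-\betta^{(2)}$.

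The key step is to bound $\norm{\vc F\vect\Phi^{(1)}-\vc F\vect\Phi^{(2)}}_\infty$ by $q\,\norm{\vect\Phi^{(1)}-\vect\Phi^{(2)}}_\infty$ with $q<1$. For the $\bxi$-component, the difference involves $\epsilon\int_\delta^\tau [\betta^{(1)}(s)-\betta^{(2)}(s)]+[\vect{A_u}(\bxi^{(1)}(s),s)-\vect{A_u}(\bxi^{(2)}(s),s)]\,\id s$; using the Lipschitz bound \eqref{eq:assumption2} on $\vect{A_u}$, this is at most $\epsilon h(1+L_c)\norm{\vect\Phi^{(1)}-\vect\Phi^{(2)}}_\infty$. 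For the $\betta$-component, the difference is $\epsilon\int_\delta^\tau \psi(\tau-s)\big[-\vect{M_u}(\bxi^{(1)},s)\betta^{(1)}+\vect{M_u}(\bxi^{(2)},s)\betta^{(2)}+\vect{B_u}(\bxi^{(1)},s)-\vect{B_u}(\bxi^{(2)},s)\big]\,\id s$. I would split the $\vect{M_u}\betta$ term as $\vect{M_u}(\bxi^{(1)})(\betta^{(1)}-\betta^{(2)})+(\vect{M_u}(\bxi^{(1)})-\vect{M_u}(\bxi^{(2)}))\betta^{(2)}$ and use $\norm{\vect{M_u}}_\infty\le L_M$, the Lipschitz bounds on $\vect{M_u}$ and $\vect{B_u}$, the uniform bound $\norm{\psi}_\infty\le 1$ (property 5 of Theorem \ref{thm:homogeneous_solution}, or equivalently $\norm{\psi}_\infty = \psi(0)=1$ by complete monotonicity), and the a priori bound $\norm{\betta^{(2)}}_\infty\le K$ coming from membership in $X_K^{\delta,h}$. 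This yields a bound of the form $\epsilon h(L_M + C_1 K + L_c)\norm{\vect\Phi^{(1)}-\vect\Phi^{(2)}}_\infty$ for a constant $C_1$ depending on the Lipschitz constants. Combining the two components, $\vc F$ is a contraction provided $h$ is chosen small enough that $\epsilon h(L_M+1+L_c+C_1 K)<1$; together with the earlier constraint \eqref{eq:hKbound} this fixes an admissible $h$.

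By the Banach fixed-point theorem (or simply: the contraction estimate applied to the two fixed points directly gives $\norm{\vect\Phi^{(1)}-\vect\Phi^{(2)}}_\infty\le q\norm{\vect\Phi^{(1)}-\vect\Phi^{(2)}}_\infty$ with $q<1$, forcing equality), the two continuations coincide on $[\delta,\delta+h)$, hence on all of $[0,\delta+h)$. The main obstacle I anticipate is purely bookkeeping: keeping track of all the Lipschitz constants and ensuring the smallness condition on $h$ is compatible with the one already imposed in Proposition \ref{lem:existence_fp} so that the same $h$ works for both existence and uniqueness (this matters for Step 5, where one repeats the continuation with a \emph{fixed} window $h$). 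One subtlety worth checking is that although the kernel $\psi(\tau-s)$ is singular-free here (it is bounded, with $\psi(0)=1$), one must still confirm $\int_\delta^\tau \psi(\tau-s)\,\id s \le \int_0^h \psi(r)\,\id r = 1-\phi(h) \le h$ using $\psi=-\phi'$ and $\phi(0)=1$; this keeps the $\betta$-estimate clean and shows the history term poses no difficulty for the contraction. No genuinely new idea beyond Theorem \ref{thm:properties}'s techniques is needed.
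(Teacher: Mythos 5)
Your overall scaffold matches the paper's: subtract the two integral equations, split the $\vect{M_u}\betta$ term exactly as the paper does, invoke the Lipschitz bounds \eqref{eq:assumption2} and $\norm{\psi}_\infty\leq 1$, and close via a contraction estimate. The place you diverge is the one place that actually matters: you bound $\norm{\betta^{(2)}}_\infty$ by the radius $K$ of the ball $X_K^{\delta,h}$, whereas the paper bounds it by $\frac{\abs{\vect{w_0}}+\epsilon L_B}{1-\epsilon L_M}$ using inequality \eqref{eq:int_ineq} of Theorem~\ref{thm:properties}(ii), which applies because any continuation is a mild solution of \eqref{eq:w_inteq} on $[0,\delta+h)$.

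This substitution is not a bookkeeping detail; it is a genuine gap. The radius $K$ in \eqref{eq:hKbound} is $K' + \frac{L_B+L_A}{2(L_M+1)}$, where $K' = \norm{\vect\Phi_0}_\infty$ depends on the local solution up to time $\delta$: in particular, the $\bxi$-component of $\vect\Phi_0$ equals $\vect y_{loc}(\delta)$ by Lemma~\ref{lemma:WD}, and there is no a priori bound keeping $\vect y_{loc}(\delta)$ from growing without bound as $\delta\rightarrow\infty$. Your contraction constant is $\epsilon h\big(L_M + 1 + L_c + C_1 K\big)$, so the admissible window $h$ shrinks as $K$ (hence $\delta$) grows. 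That breaks Step~5, which requires one fixed $h$ so the continuation can be iterated to $[0,\infty)$; with your bound the process could terminate at a finite escape time $\delta+\sum_n h_n<\infty$. The point of using \eqref{eq:int_ineq} is precisely that the resulting bound on $\norm{\betta_i}_\infty$ involves only $\abs{\vect w_0}$, $L_B$, $L_M$, $\epsilon$, so the $h$ in \eqref{eq:h} is manifestly $\delta$-independent. This is also why, as the paper's Remark after this proof explains, one cannot simply invoke Banach's fixed-point theorem in $X_K^{\delta,h}$ to get existence and uniqueness at once: the contraction estimate the paper needs rests on \eqref{eq:int_ineq}, a property of \emph{solutions}, so existence must be established (via Schauder) before the contraction argument is available. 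Your secondary observation that $\int_\delta^\tau\psi(\tau-s)\,\id s = 1-\phi(\tau-\delta)\leq h$ is correct and slightly sharper than what the paper uses, but it does not touch the issue above.
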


\begin{proof}
Suppose $\left(\vect{y}_1,\vect{w}_1\right)$ and $\left(\vect{y}_2,\vect{w}_2\right)$ are two different continuations of the local solution of \eqref{eq:w_inteq} from $[0, \delta)$ to $[\delta, \delta + h)$. That is
\begin{equation*}
\vc y_1(\tau) = \indf_{[0,\delta)}(\tau)\vc y_{loc}(\tau)+\indf_{[\delta,\delta+h)}(\tau) \bxi_1(\tau),\; \vc w_1(\tau) = \indf_{[0,\delta)}(\tau)\vc w_{loc}(\tau)+\indf_{[\delta,\delta+h)}(\tau) \betta_1(\tau),
\end{equation*}
and
\begin{equation*}
\vc y_2(\tau) = \indf_{[0,\delta)}(\tau)\vc y_{loc}(\tau)+\indf_{[\delta,\delta+h)}(\tau) \bxi_2(\tau),\; \vc w_2(\tau) = \indf_{[0,\delta)}(\tau)\vc w_{loc}(\tau)+\indf_{[\delta,\delta+h)}(\tau) \betta_2(\tau),
\end{equation*}
where, as discussed in Section \ref{sec:cont}, $(\xi_i,\eta_i)$ solve the integral equations 
\begin{equation}
\begin{pmatrix}
\xi_i(\tau)\\ \eta_i(\tau)
\end{pmatrix}=\vect\Phi_0(\tau)+\epsilon
\left(
  \begin{array}{lr}
     \int_{\delta}^{\tau} \! \betta_i(s) + \vect{A_u}(\bxi_i(s), s) \ \id s \\
     \int_{\delta}^{\tau} \! \psi(\tau - s)\left[-\vect{M_u}(\bxi_i(s), s)\betta_i(s) + \vect{B_u}(\bxi_i(s), s)\right] \ \id s
  \end{array}
\right),
\end{equation}
for $i \in \{1, 2\}$. 

Define $\vect\Phi_i=(\bxi_i,\betta_i)$ and bound $|\vect\Phi_1-\vect\Phi_2|$ by
\begin{equation}
\begin{split} \label{eq:u_inequality}
|\vect\Phi_1(\tau)-\vect\Phi_2(\tau)| \leq & \epsilon\int_{\delta}^{\delta + h} \! \abs{\betta_1(s) - \betta_2(s)} + \abs{\vect{A_u}(\bxi_1(s), s) - \vect{A_u}(\bxi_2(s), s)} \ \id s \\
                              & +  \epsilon\int_{\delta}^{\delta + h} \! \abs{\psi(\tau - s)} \left(\abs{\vect{M_u}(\bxi_1(s), s)(\betta_1(s) - \betta_2(s))} \right. \\
															&\quad\quad +  \left. \abs{\betta_2(s)} \abs{\vect{M_u}(\bxi_1(s), s) - \vect{M_u}(\bxi_2(s), s)}\right) \ \id s,\\
															& +  \epsilon\int_{\delta}^{\delta + h} \! \abs{\psi(\tau - s)}\abs{B_u(\bxi_1(s), s) - \vect{B_u}(\bxi_2(s), s)} \ \id s,
\end{split}
\end{equation}
where we wrote $\abs{\vect{M_u}(\bxi_1(s), s)\betta_1(s) - \vect{M_u}(\bxi_2(s), s)\betta_2(s)}$ as
\benn
\abs{\vect{M_u}(\bxi_1(s), s)(\betta_1(s) - \betta_2(s)) + \left(\vect{M_u}(\bxi_1(s), s) - \vect{M_u}(\bxi_2(s), s)\right)\betta_2(s)}.
\eenn

Since $(\vc y_i,\vc w_i)$ solves the MR equation $[0, \delta + h)$, inequality \eqref{eq:int_ineq} applies and we have 
$$\|\betta_i\|_\infty:=\sup_{\delta\leq\tau<\delta+h}|\betta_i(\tau)|\leq\sup_{0\leq\tau<\delta+h}|\vc w_i(\tau)|\leq\frac{\abs{\vect{w_0}} + \epsilon L_B}{1 - \epsilon L_M},\quad i\in\{0,1\}.$$
Taking the supremum over $\tau\in[\delta,\delta+h)$ on both sides of \eqref{eq:u_inequality} and using the above upper bound on $\|\betta_i\|_\infty$, we get
\begin{equation*}
\begin{split}
\norm{\vect\Phi_1-\vect\Phi_2}_\infty & \leq \epsilon h \left[L_c \norm{\betta_1 - \betta_2}_\infty + L_c \norm{\bxi_1 - \bxi_2}_\infty \right] \\  
&  +   \epsilon h \left[L_M \norm{\betta_1 - \betta_2}_\infty + L_c \left(\frac{\abs{\vect{w_0}} + \epsilon L_B}{1 - \epsilon L_M}\right) \norm{\bxi_1 - \xi_2}_\infty + L_c \norm{\bxi_1 - \bxi_2}_\infty \right], \\
             & \leq 2h\epsilon \left[3L_c + L_M + L_c \left(\frac{\abs{\vect{w_0}} + \epsilon L_B}{1 - \epsilon L_M}\right)\right] \norm{\vect\Phi_1-\vect\Phi_2}_\infty.
\end{split}
\end{equation*}

Taking $h>0$ small enough, one obtains $\|\vect\Phi_1-\vect\Phi_2\|_\infty\leq\frac{1}{2}\|\vect\Phi_1-\vect\Phi_2\|_\infty$ which, in turn, implies the uniqueness of the solution: $\vect\Phi_1=\vect\Phi_2$. The time window $h$ can for instance be chosen as
\begin{equation} \label{eq:h}
h = \frac{1}{2} \min \left(\frac{1}{\epsilon(L_M + 1)}, \frac{1}{2\epsilon \left[3L_c + L_M + L_c \left(\frac{\abs{\vect{w_0}} + \epsilon L_B}{1 - \epsilon L_M}\right)\right]}\right),
\end{equation}
which also respects the inequality \eqref{eq:hKbound}. With this $h$, therefore, the continuation \eqref{eq:ext} is unique.
\end{proof}

\begin{remark}
The above analysis is a contraction mapping argument. It is, therefore, tempting to use the Banach fixed point theorem (instead of the Schauder's fixed point theorem) in order to obtain the existence and uniqueness of the continuation \eqref{eq:ext} at once. The Banach fixed point theorem, however, does not apply here. This is because in proving the above contraction property, we made use of inequality \eqref{eq:int_ineq} which applies to the mild solutions of the MR equation. As a result, it was necessary to show the existence of continuation \eqref{eq:ext} first. Otherwise, inequality \eqref{eq:int_ineq} does not apply and the estimates used in the above contraction argument fail.
\end{remark}
\vspace{10pt}

So far we have proved the existence of a unique mild solution to the MR equation over the time interval $[0,\delta+h)$ with $h$ given in \eqref{eq:h}. The steps taken in sections \ref{sec:cont}, \ref{sec:exist} and \ref{sec:unique} can be applied to this extended local solution to prove the existence and uniqueness of a mild solution over the time interval $[0,\delta+2h)$. This is because the continuation window $h$ is independent of the constants $K$ and $\delta$ from the Banach space $X_K^{\delta,h}$.

Applying this argument repeatedly extends the mild solution of the Maxey--Riley equation from its local interval of existence and uniqueness $[0, \delta)$ to $[0, \delta + nh]$, for any $n \in \mathbb{N}$. Thus the solution can be extended uniquely to $[0, \infty)$. This proves Theorem \ref{thm:global_mr}. 

\section{Summary and discussion}
Motivated by the recent observations on the relevance of the memory effects on inertial particle dynamics, we have derived global existence and asympotic decay results for the Maxey--Riley equation in the presence of the Basset--Boussinesq memory term. This memory term, a fractional derivative of order $1/2$ \cite{daitche2013advection,MR_EUR}, greatly complicates the analytical and numerical treatment of the equation. While the behavior of the solutions has been well-understood in the absence of the memory term \cite{rubin1995_IP,mograbi2006,IP_haller08,sapsis2010clustering}, no global analytic results have been available for the full equation with memory.

We have proved that the solutions converge asymptotically to a trapping region where the particle velocity is $\mathcal O(\epsilon)$-close to the fluid velocity. Here, $\epsilon$ is proportional to $(a/L)^2$ where $a$ is the particle radius and 
$L$ is the characteristic length-scale of the fluid flow. This result holds for $0<\epsilon\ll 1$ small enough which translates into $a\ll L$ (See Theorem \ref{thm:properties}, for the exact statement of the assumption). This assumption is not restrictive since the MR equation is only valid under the very same condition $a\ll L$ \cite{MR}. 

We also derived an upper envelope for the transient dynamics. This envelope exhibits an algebraic decay to the asymptotic state, hence confirming the numerical observations of \cite{daitche2011memory,guseva2013influence,Daitche_NJP} in a more general framework. We showed with an example that this envelope can be saturated and therefore our upper estimates are sharp. 

Upon neglecting the memory term, the convergence to the asymptotic limit is exponential \cite{rubin1995_IP,mograbi2006,IP_haller08}. Therefore, the Basset--Boussinesq memory fundamentally alters the behavior of the inertial particles and cannot be readily neglected. From a mathematical point of view, the memory term also fundamentally changes the structure of the equation. In the absence of memory, the Maxey--Riley equation is an ordinary differential equation, generating a dynamical system. The memory term turns the equation into an integro-differential equation that does not generate a dynamical system.

Our asymptotic results are only applicable if the Maxey--Riley equation possesses global solutions. Because of the particular coupling and nonlinearity of the equation, available results on fractional-order differential equations do not guarantee the existence and uniqueness of global solutions to the Maxey--Riley equation. To this end, we have included here the first proof of the existence and uniqueness of global solutions to the Maxey--Riley equation. As already pointed out by \cite{MR_EUR}, the particle velocity is only guaranteed to be continuous for all times.

\section*{Acknowledgment}
We would like to thank Anton Daitche for his help with implementing the numerical scheme of \cite{daitche2013advection}.

\begin{appendices}
\addcontentsline{toc}{part}{\appendixname}
\section{Proof of Theorem \ref{thm:homogeneous_solution}}
\label{app:calculation}

Consider the fractional differential equation
\be
\label{eq:relaxation}
\frac{d\vect{w}}{d\tau} + \kappa\frac{d^{1/2}\vect{w}}{d\tau^{1/2}} + \vect{w} = 0, 
\ee
with $\vect{w}(0) = \vect{w}_0$ as initial condition. Let $\vect{W}(s) = \left(\Lapl\left[\vect{w}\right]\right)\left(s\right)$ denote the Laplace transform of $\vect{w}\left(\tau\right)$. Since
\benn
\left(\Lapl \left[\frac{d\vect{w}}{d\tau}\right]\right)(s) = s\vect{W}(s) - \vect{w}_0
\eenn
and
\benn
\left(\Lapl \left[\frac{1}{\sqrt{\tau}}\right]\right)(s) = \sqrt{\frac{\pi}{s}},
\eenn
the Laplace transform of the Riemann-Liouville derivative in \eqref{eq:relaxation} has the expression
\begin{equation*}
\begin{split}
\left(\Lapl \left[\frac{d^{1/2}\vect{w}}{d\tau^{1/2}}\right]\right)(s) &= \frac{1}{\sqrt{\pi}} \left(\Lapl \left[\int_{0}^{\tau} \! \frac{d\vect{w}}{d\tau}\frac{1}{\sqrt{\tau - s}} \ \id s\right]\right)(s) + \frac{1}{\sqrt\pi}\left(\Lapl \left[\frac{\vect{w}_0}{\sqrt{\tau}}\right]\right)(s), \\
 &= \frac{1}{\sqrt{\pi}}\left(\Lapl \left[\frac{d\vect{w}}{d\tau}\right] \right)(s)\left(\Lapl \left[\frac{1}{\sqrt{\tau}}\right] \right)(s) + \frac{\vect{w}_0}{\sqrt{s}}, \\
 &= \left(s\vect{W}(s) - \vect{w}_0\right)\frac{1}{\sqrt{s}} + \frac{\vect{w}_0}{\sqrt{s}}, \\
 &= \sqrt{s}\vect{W}(s),
\end{split}
\end{equation*}
where we used the identity
\benn
\frac{\id}{\id \tau}\int_{0}^{\tau}\frac{\vect{w}(s)}{\sqrt{\tau - s}} \ \id s = \int_{0}^{\tau}\frac{\id \vect{w}}{\id \tau}\frac{1}{\sqrt{\tau-s}} \ \id s + \frac{\vc w(0)}{\sqrt{\tau}}.
\eenn
Now we use the Laplace transform on \eqref{eq:relaxation} and solve for $\vect{W}(s)$ to get
\benn
\vect{W}\left(s\right) = \frac{\vect{w}_0}{s + \kappa\sqrt{s} + 1}.
\eenn
The denominator can be factorized as
\benn
\label{eq:l_relaxation2}
\vect{W}\left(s\right) = \frac{\vect{w}_0}{\left(\sqrt{s} + \lambda_{+}\right)\left(\sqrt{s} + \lambda_{-}\right)},
\eenn
where
\benn
\label{eq:lambda_pm}
\lambda_{\pm} = \frac{\left(\kappa \pm \sqrt{\kappa^2 - 4}\right)}{2}.
\eenn
Hence the general solution of \eqref{eq:relaxation} is
\be
\vect{w}(\tau; \vect{w}_0) = \vect{w}_0 \left(\Lapl^{-1} \left[\frac{1}{\left(\sqrt{s} + \lambda_{+}\right)\left(\sqrt{s} + \lambda_{-}\right)}\right]\right)\left(\tau\right)
\ee
The function $\vect{w}(\tau;\vc w_0)$ is proportional to the Mittag-Leffler function of order $1/2$, which is defined as
\be
\label{eq:mittag}
E_{1/2}\left(-z\right) = e^{z^2}\erfc{z}
\ee
for any complex number $z \in \mathbb{C}$ (see, e.g., \cite{bateman1955higher}, Section 18.1). The Laplace transform of $E_{1/2}$ is given by (see \cite{haubold2011mittag}, Eq. 11.13):
\be
\label{eq:l_rule}
\left(\Lapl \left[E_{1/2}\left(-a\sqrt{z}\right)\right]\right)(s) = \frac{1}{\sqrt{s}\left(\sqrt{s} + a\right)}
\ee
for any $a \in \mathbb{C}$.

To study the behavior of $E_{1/2}\left(-z\right)$ as $z \rightarrow \infty$, we will make use of the asymptotic expansion of the complementary error function (\cite{Abramowitz+Stegun}, Eq. 7.1.23):
\be
\erfc{z} \sim \frac{e^{-z^2}}{z\sqrt{\pi}}\left(1 - \frac{1}{2z^2} + \frac{3}{4z^4} + \mathcal O\left(\frac{1}{z^6}\right)\right).
\ee
Substituing in \eqref{eq:mittag}, we obtain
\be
\label{eq:mittag_asymptotic}
E_{1/2}\left(-z\right) \sim \frac{1}{z\sqrt{\pi}}\left(1 - \frac{1}{2z^2} + \frac{3}{z^4} + \mathcal O\left(\frac{1}{z^6}\right)\right).
\ee
The asymptotic expansion of $\erfc{z}$ is valid only if $\abs{\operatorname{arg}\left(z\right)} < \frac{3\pi}{4}$ \cite{Abramowitz+Stegun}. It also diverges for any finite value of z; its sole purpose is to give the rate of decay as $z \rightarrow \infty$.

The general solution will depend on whether the discriminant of $\lambda_{\pm}$, i.e. $\kappa^2 - 4$, is positive, zero, or negative.

\subsection{Case 1: $\kappa > 2$ (i.e., $R>16/9$)}

We have
\benn
\label{eq:l_case1}
\vect{W}\left(s\right) = \frac{\vect{w}_0}{\left(\sqrt{s} + \lambda_{+}\right)\left(\sqrt{s} + \lambda_{-}\right)}
\eenn
or, after some algebra,
\benn
\label{eq:l_case1_2}
\vect{W}\left(s\right) = \frac{\vect{w}_0}{\lambda_{+} - \lambda_{-}}\left[\frac{\lambda_+}{\sqrt{s}\left(\sqrt{s} + \lambda_+\right)} - \frac{\lambda_-}{\sqrt{s}\left(\sqrt{s} + \lambda_-\right)}\right].
\eenn
Invert the two terms in the above expression with the rule \eqref{eq:l_rule} to get
\be
\label{eq:case1sol}
\vect{w}(\tau; \vect{w}_0) = \frac{\vect{w}_0}{\lambda_{+} - \lambda_{-}}\left[\lambda_+E_{1/2}\left(-\lambda_+\sqrt{\tau}\right) - \lambda_-E_{1/2}\left(-\lambda_-\sqrt{\tau}\right)\right].
\ee

Since $\kappa - \sqrt{\kappa^2-4}$ is always greater than zero, we can use the asymptotic expansion \eqref{eq:mittag_asymptotic} to find that in the limit $\tau \rightarrow \infty$,
\be
\begin{split}
\vect{w}(\tau; \vect{w}_0) &\sim \frac{\vect{w}_0}{\lambda_+ - \lambda_-}\left[\frac{1}{\sqrt{\pi \tau}}\left(1 - \frac{1}{2\lambda_{+}^{2}\tau}\right) \right. \\
& \left. - \frac{1}{\sqrt{\pi \tau}}\left(1 - \frac{1}{2\lambda_{-}^{2}\tau}\right) + \mathcal O\left(\tau^{-5/2}\right) \right], \\
&\sim \frac{\vect{w}_0}{2\sqrt{\pi}\left(\lambda_+ - \lambda_-\right)} \left(\frac{\lambda_{+}^{2} - \lambda_{-}^{2}}{\lambda_{+}^{2}\lambda_{-}^{2}}\right)\tau^{-3/2} + \mathcal O\left(\tau^{-5/2}\right), \\
&\sim \left(\frac{\kappa \vect{w}_0}{2\sqrt{\pi}}\right)\tau^{-3/2} + \mathcal O\left(\tau^{-5/2}\right),
\end{split}
\ee
where we used that $\lambda_+ + \lambda_- = \kappa$ and $\lambda_+\lambda_- = 1$.

\subsection{Case 2: $\kappa = 2$ (i.e., $R=16/9$)}

We have
\be
\label{eq:l_case2}
\vect{W}(s) = \frac{\vect{w}_0}{\left(\sqrt{s} + 1\right)^2}
\ee
or, after a bit of algebra,
\begin{equation}
\begin{split} \label{eq:l_case2_2}
\vect{W}(s) &= \vect{w}_0\left(\frac{1}{\sqrt{s}\left(\sqrt{s} + 1\right)} - \frac{1}{\sqrt{s}\left(\sqrt{s} + 1\right)^2}\right) \\
     &= \vect{w}_0\left(\frac{1}{\sqrt{s}\left(\sqrt{s} + 1\right)} + 2 \frac{d}{ds}\left(\frac{1}{\sqrt{s} + 1}\right)\right).
\end{split}
\end{equation}
We can invert the first term in \eqref{eq:l_case2_2} with \eqref{eq:l_rule}. The second term can be inverted by using the Laplace transforms \cite[Equations A.27, A.28, and A.35.]{mainardi_lecNotes}
\be
\label{eq:l_rule_2}
\left(\Lapl \left[\frac{1}{\sqrt{\pi \tau}} - E_{1/2}\left(-\sqrt{\tau}\right)\right]\right)(s) = \frac{1}{\sqrt{s} + 1}
\ee
and
\be
\left(\Lapl \left[-\tau f(\tau)\right]\right)(s) = \frac{d}{ds}(\Lapl \left[f(\tau)\right])(s).
\ee
Thus the inverse Laplace transform of \eqref{eq:l_case2} is
\be
\vect{w}(\tau; \vect{w}_0) = \vect{w}_0\left[E_{1/2}\left(-\sqrt{\tau}\right)\left(1 + 2 \tau\right) - \frac{2\sqrt{\tau}}{\sqrt{\pi}}\right].
\ee

With the asymptotic expansion \eqref{eq:mittag_asymptotic} we find that in the limit $\tau \rightarrow \infty$,
\be
\begin{split}
\vect{w}(\tau; \vect{w}_0) &\sim \vect{w}_0 \left[\frac{1}{\sqrt{\pi \tau}}\left(1 - \frac{1}{2 \tau } + \frac{3}{4 \tau^2} + \mathcal O\left(\tau^{-3}\right)\right) \right. \\
&+ \left. \frac{2\sqrt{\tau}}{\sqrt{\pi}}\left(1 - \frac{1}{2 \tau} + \frac{3}{4 \tau^2} + \mathcal O\left(\tau^{-3}\right)\right) - \frac{2\sqrt{\tau}}{\sqrt{\pi}}\right] \\
&\sim \left(\frac{\vect{w}_0}{\sqrt{\pi}}\right)\tau^{-3/2} + \mathcal O \left(\tau^{-5/2}\right) \\
\end{split}
\ee

\subsection{Case 3: $0 < \kappa < 2$ (i.e., $R<16/9$)}

We have
\benn
\label{eq:l_case3}
\vect{W}\left(s\right) = \frac{\vect{w}_0}{\left(\sqrt{s} + \lambda_{+}\right)\left(\sqrt{s} + \lambda_{-}\right)}
\eenn
This is the same Laplace transform as in the case $\kappa > 2$, except that $\lambda_+$ and $\lambda_-$ are now complex conjugate numbers. The inverse Laplace transform is the same as \eqref{eq:case1sol}:
\be \label{eq:sol_case3_1}
\vect{w}(\tau; \vect{w}_0) = \frac{\vect{w}_0}{\lambda_{+} - \lambda_{-}}\left[\lambda_+E_{1/2}\left(-\lambda_+\sqrt{\tau}\right) - \lambda_-E_{1/2}\left(-\lambda_-\sqrt{\tau}\right)\right].
\ee
The quotients
\benn
\frac{\lambda_+}{\lambda_+ - \lambda_-} = \frac{1}{2}\left(1 - i \frac{\kappa}{\sqrt{4 - \kappa^2}}\right)
\eenn
and
\benn
-\frac{\lambda_-}{\lambda_+ - \lambda_-} = \frac{1}{2}\left(1 + i \frac{\kappa}{\sqrt{4 - \kappa^2}}\right)
\eenn
in \eqref{eq:sol_case3_1} are also complex conjugates. Since $\left(e^{\overline{w}}\right) = \overline{\left(e^w\right)}$ and $\erfc{\overline{w}} = \overline{\erfc{w}}$ for every $w \in \mathbb{C}$, it follows also that $E_{1/2}\left(\overline{w}\right) = \overline{E_{1/2}\left(w\right)}$. Thus
\benn
\vect{w}(\tau; \vect{w_0}) = \vect{w_0}\left[\left(\frac{\lambda_+}{\lambda_+ - \lambda_-}E_{1/2}\left(-\lambda_+\sqrt{\tau}\right)\right) + \overline{\left(\frac{\lambda_+}{\lambda_+ - \lambda_-}E_{1/2}\left(-\lambda_+\sqrt{\tau}\right)\right)} \right]
\eenn
or simply twice the real part of $w(\tau; w_0)$.
\be \label{eq:sol_case3_2}
\begin{split}
\vect{w}(\tau; \vect{w}_0) &= 2\vect{w}_0\operatorname{Re}\left(\frac{\lambda_+}{\lambda_+ - \lambda_-}E_{1/2}(-\lambda_+\sqrt{\tau})\right), \\
        &= 2\vect{w}_0\left[\operatorname{Re}\left(\frac{\lambda_+}{\lambda_+ - \lambda_-}\right)\operatorname{Re}\left(E_{1/2}\left(-\lambda_+\sqrt{\tau}\right)\right) \right. \\
				&+ \left. \operatorname{Im}\left(\frac{\lambda_+}{\lambda_+ - \lambda_-}\right)\operatorname{Im}\left(E_{1/2}\left(-\lambda_+\sqrt{\tau}\right)\right) \right].
\end{split}
\ee
It is possible to further simplify \eqref{eq:sol_case3_1}. It turns out that the Mittag-Leffler function $E_{1/2}\left(-z\right)$ may be written as (\cite{NIST:DLMF}, section 7.19)
\be
E_{1/2}\left(-z\right) = \sqrt{\frac{4t}{\pi}}\left[U\left(x, t\right) + i V\left(x, t\right)\right],
\ee
where
\be
U\left(x,t\right) = \frac{1}{\sqrt{4\pi t}} \int_{-\infty}^{\infty} \! \frac{e^{-\left(x + s\right)^2/\left(4t\right)}}{1+s^2} \ \id s,
\ee
\be
V\left(x,t\right) = \frac{1}{\sqrt{4\pi t}} \int_{-\infty}^{\infty} \! \frac{se^{-\left(x + s\right)^2/\left(4t\right)}}{1+s^2} \ \id s,
\ee
$z = \frac{1-ix}{2\sqrt{t}}$, $x \in \mathbb{R}$, and $t > 0$. The functions $U\left(x,t\right)$ and $V\left(x,t\right)$ are known as the Voigt functions (\cite{NIST:DLMF}, section 7.19). If we set
\benn
z = \frac{1-ix}{2\sqrt{t}} = \lambda_+ \sqrt{\tau} = \left(\frac{\kappa}{2} + i \frac{\sqrt{4 - \kappa^2}}{2}\right)\sqrt{\tau},
\eenn
then we can solve for $x$ and $t$ to get
\benn
t=\frac{1}{\kappa^2\tau}
\eenn
and
\benn
x = -\frac{\sqrt{4 - \kappa^2}}{\kappa}.
\eenn
Thus
\be
\begin{split}
E_{1/2}\left(-\lambda_+\sqrt{\tau}\right) &= \frac{2}{\kappa\sqrt{\pi \tau}}\left[U\left(-\frac{\sqrt{4-\kappa^2}}{\kappa},\frac{1}{\kappa^2\tau}\right) \right. \\
&- \left. i V\left(-\frac{\sqrt{4-\kappa^2}}{\kappa},\frac{1}{\kappa^2\tau}\right)\right].
\end{split}
\ee
Hence \eqref{eq:sol_case3_2} can be written as
\be
\begin{split}
\vect{w}\left(\tau; \vect{w}_0\right) &= \frac{2\vect{w}_0}{\kappa\sqrt{\pi \tau}} \left[U\left(-\frac{\sqrt{4-\kappa^2}}{a},\frac{1}{\kappa^2\tau}\right) \right. \\ 
&- \left. \frac{\kappa}{\sqrt{4 - \kappa^2}}V\left(-\frac{\sqrt{4-\kappa^2}}{\kappa},\frac{1}{\kappa^2\tau}\right) \right].
\end{split}
\ee

For the asymptotic behaviour of $\vect{w}(\tau; \vect{w}_0)$ as $\tau \rightarrow \infty$, we can repeat the steps as in the case $\kappa > 2$ and obtain
\be
\vect{w}(\tau; \vect{w}_0) \sim \left(\frac{\kappa \vect{w}_0}{2\sqrt{\pi}}\right)\tau^{-3/2} + \mathcal O\left(\tau^{-5/2}\right).
\ee
This asymptotic expansion, however, is justified only if $\abs{\operatorname{arg}\left(\lambda_+\sqrt{\tau}\right)}$ and $\abs{\operatorname{arg}\left(\lambda_+\sqrt{\tau}\right)}$ are smaller than $\frac{3\pi}{4}$. Since $\lambda_{\pm} = \left(\kappa \pm i\sqrt{4 - \kappa^2}\right)/2$ we see that this will be the case whenever $\kappa > 0$, since then $0 < \operatorname{arg}\left(\lambda_+\sqrt{\tau}\right) < \frac{\pi}{2}$ and $ -\frac{\pi}{2} < \operatorname{arg}\left(\lambda_-\sqrt{\tau}\right) < 0$ (to see this, note that the two complex numbers $\lambda_+$ and $\lambda_-$ lie to the right of the imaginary axis, so that the argument cannot be greater than $\pi / 2$). Note that since $\kappa=\sqrt{9R/2}$, the required condition $\kappa>0$ is always satisfied. 

\section{Proof of Theorem \ref{thm:properties}}
\label{app:properties}

We will use the following Gronwall-type inequality.
\begin{lemma}[Chu \& Metcalf \cite{chu1967gronwall}]
Let the functions $\alpha, \beta: \mathbb R^+\rightarrow\mathbb R$ be continuous and the function $K(\tau,s)$ be continuous and nonnegative for $0\leq s\leq \tau$. If 
$$\alpha(\tau)\leq \beta (\tau)+\int_{0}^\tau K(\tau,s)\alpha(s) \ \id s,$$
then
$$\alpha(\tau)\leq \beta (\tau)+\int_{0}^\tau H(\tau,s)\beta(s)\ \id s,$$
where $H(\tau,s)=\sum_{j=1}^\infty K_j(\tau,s)$, $K_1(\tau,s)=K(\tau,s)$ and 
$$K_j(\tau,s)=\int_{s}^{\tau}K_{j-1}(\tau,\xi)K(\xi,s)\ \id\xi,\ \ \ j\geq 2.$$
\label{lem:gronwall}
\end{lemma}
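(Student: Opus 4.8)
The plan is to view the right-hand side of the hypothesis as the action of the linear Volterra operator
\[
(\mathcal{T}\phi)(\tau) \;:=\; \int_{0}^{\tau} K(\tau,s)\,\phi(s)\,\id s
\]
on $\alpha$, so that the hypothesis reads $\alpha \le \beta + \mathcal{T}\alpha$ pointwise. The key structural observation is that, since $K(\tau,s)\ge 0$, the operator $\mathcal{T}$ is \emph{monotone}: $\phi\le\psi$ pointwise implies $\mathcal{T}\phi\le\mathcal{T}\psi$, with no sign condition imposed on $\phi$ or $\psi$ (in particular $\beta$ need not be nonnegative). Applying $\mathcal{T}$ repeatedly to the hypothesis and invoking monotonicity at each step yields, by a trivial induction,
\[
\alpha(\tau)\;\le\;\sum_{j=0}^{n}\bigl(\mathcal{T}^{j}\beta\bigr)(\tau)\;+\;\bigl(\mathcal{T}^{n+1}\alpha\bigr)(\tau),\qquad n\in\mathbb{N},
\]
with the convention $\mathcal{T}^{0}\beta=\beta$. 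The conclusion follows once I (a) identify the iterates $\mathcal{T}^{j}\beta$ with the kernels $K_j$, (b) establish convergence of $\sum_j\mathcal{T}^{j}\beta$ to $\int_0^\tau H(\tau,s)\beta(s)\,\id s$, and (c) show the remainder $\mathcal{T}^{n+1}\alpha$ tends to $0$.

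For (a), I would prove by induction on $j\ge1$ that $(\mathcal{T}^{j}\beta)(\tau)=\int_{0}^{\tau}K_j(\tau,s)\,\beta(s)\,\id s$. The base case $j=1$ is the definition $K_1=K$; for the step, write $\mathcal{T}^{j}=\mathcal{T}^{j-1}\circ\mathcal{T}$, insert the inductive hypothesis, and interchange the order of integration over the triangle $\{0\le s\le\xi\le\tau\}$ (legitimate since $K$, being continuous, is bounded on that compact set), which reproduces precisely the defining recursion $K_j(\tau,s)=\int_{s}^{\tau}K_{j-1}(\tau,\xi)K(\xi,s)\,\id\xi$. Next, fix an arbitrary $T>0$ and put $M:=\sup\{|K(\tau,s)|:0\le s\le\tau\le T\}<\infty$. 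A second induction gives the factorial bound
\[
0\;\le\;K_j(\tau,s)\;\le\;M^{j}\,\frac{(\tau-s)^{j-1}}{(j-1)!},\qquad 0\le s\le\tau\le T,
\]
so that $\sum_{j\ge1}K_j$ converges absolutely and uniformly on that triangle to a continuous function $H$ with $H(\tau,s)\le M e^{M(\tau-s)}$. The uniform convergence (together with boundedness of the continuous function $\beta$ on $[0,T]$) lets us interchange sum and integral, giving $\sum_{j\ge1}(\mathcal{T}^{j}\beta)(\tau)=\int_{0}^{\tau}H(\tau,s)\beta(s)\,\id s$; this disposes of (a) and (b).

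For (c), continuity of $\alpha$ gives $A:=\sup_{[0,T]}|\alpha|<\infty$, and the factorial bound yields
\[
\bigl|(\mathcal{T}^{n+1}\alpha)(\tau)\bigr|\;\le\;A\int_{0}^{\tau}M^{n+1}\frac{(\tau-s)^{n}}{n!}\,\id s\;=\;A\,\frac{(M\tau)^{n+1}}{(n+1)!}\;\longrightarrow\;0
\]
as $n\to\infty$, uniformly on $[0,T]$. Passing to the limit $n\to\infty$ in the iterated inequality then gives $\alpha(\tau)\le\beta(\tau)+\int_{0}^{\tau}H(\tau,s)\beta(s)\,\id s$ for all $\tau\in[0,T]$, and since $T>0$ was arbitrary the inequality holds on all of $\mathbb{R}^{+}$.

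I do not anticipate a genuine obstacle: this is a Picard/Neumann iteration and the work is bookkeeping — keeping everything local to a fixed interval $[0,T]$ so that $M$ is finite, noting that monotonicity of $\mathcal{T}$ requires only $K\ge0$ (not positivity of $\alpha,\beta$), and justifying the two interchanges of limiting operations by the uniform convergence of the kernel series. Equivalently, and perhaps more cleanly, one may observe that $\mathcal{T}$ is a Volterra operator with continuous kernel, hence quasi-nilpotent on $C([0,T])$, so that $I-\mathcal{T}$ is invertible with $(I-\mathcal{T})^{-1}=\sum_{j\ge0}\mathcal{T}^{j}$ a positivity-preserving operator; applying $(I-\mathcal{T})^{-1}$ to the inequality $\alpha-\mathcal{T}\alpha\le\beta$ immediately gives $\alpha\le(I-\mathcal{T})^{-1}\beta=\beta+\int_{0}^{\tau}H(\tau,s)\beta(s)\,\id s$.
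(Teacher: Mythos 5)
The paper does not supply a proof of this lemma; it is cited directly from Chu \& Metcalf \cite{chu1967gronwall} and used as an imported tool. Your argument is correct and is, in essence, the Chu--Metcalf argument itself: iterate the Volterra operator $\mathcal{T}$, use monotonicity (which indeed needs only $K\ge 0$), identify $\mathcal{T}^{j}$ with the iterated kernel $K_j$ via Fubini on the triangle, use the factorial bound $K_j\le M^j(\tau-s)^{j-1}/(j-1)!$ to get uniform convergence of $\sum_j K_j$ and decay of the remainder $\mathcal{T}^{n+1}\alpha$, and pass to the limit. The closing remark that $\mathcal{T}$ is quasi-nilpotent on $C([0,T])$, so that $(I-\mathcal{T})^{-1}=\sum_{j\ge0}\mathcal{T}^{j}$ is a well-defined positivity-preserving operator, is a clean abstract repackaging of the same computation and adds nothing to verify. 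One small bookkeeping point worth keeping explicit in a final write-up: the induction step $(\mathcal{T}^{j}\beta)(\tau)=\int_0^\tau K_j(\tau,s)\beta(s)\,\id s$ is most naturally run with $\mathcal{T}^{j}=\mathcal{T}^{j-1}\circ\mathcal{T}$ (as you did), since that is what reproduces the recursion $K_j(\tau,s)=\int_s^\tau K_{j-1}(\tau,\xi)K(\xi,s)\,\id\xi$ with $K_{j-1}$ carrying the ``outer'' variable $\tau$.
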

\begin{cor}
If $K(\tau,s)=k(\tau-s)$, then one can show that $K_j(\tau,s)=k_j(\tau-s)$ where
$$k_j(\tau)=(k\ast k\ast\cdots\ast k)(\tau),$$
where the convolution is $j$-fold. As a result, $H(\tau,s)=h(\tau-s)$ where
$$h(\tau)=\sum_{j=1}^{\infty}k_j(\tau).$$
\end{cor}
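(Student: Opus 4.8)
The plan is to prove the identity $K_j(\tau,s)=k_j(\tau-s)$ by induction on $j$, and then to obtain the statement about $H$ by summing the resulting series. For the base case $j=1$, the definition in Lemma~\ref{lem:gronwall} gives $K_1(\tau,s)=K(\tau,s)=k(\tau-s)$, which is $k_1(\tau-s)$ with the convention that $k_1=k$ is the ``one-fold convolution''. For the inductive step I would assume $K_{j-1}(\tau,s)=k_{j-1}(\tau-s)$ with $k_{j-1}=k\ast\cdots\ast k$ ($(j-1)$-fold), and substitute into the recursion defining $K_j$:
$$K_j(\tau,s)=\int_s^\tau K_{j-1}(\tau,\xi)\,K(\xi,s)\,\id\xi=\int_s^\tau k_{j-1}(\tau-\xi)\,k(\xi-s)\,\id\xi.$$
The change of variable $u=\xi-s$ (so $\id\xi=\id u$, and $\xi$ ranging over $[s,\tau]$ corresponds to $u$ ranging over $[0,\tau-s]$) rewrites the right-hand side as
$$\int_0^{\tau-s} k_{j-1}\big((\tau-s)-u\big)\,k(u)\,\id u=(k_{j-1}\ast k)(\tau-s)=k_j(\tau-s),$$
where the last equality uses associativity of convolution on $[0,\infty)$ to identify $k_{j-1}\ast k$ with the $j$-fold convolution $k_j$. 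This closes the induction.

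Having established $K_j(\tau,s)=k_j(\tau-s)$ for every $j\geq1$, the definition $H(\tau,s)=\sum_{j=1}^\infty K_j(\tau,s)$ from Lemma~\ref{lem:gronwall} immediately yields $H(\tau,s)=\sum_{j=1}^\infty k_j(\tau-s)=h(\tau-s)$ with $h(\tau)=\sum_{j=1}^\infty k_j(\tau)$, which is the claimed identity. For this last step to be meaningful the series must converge; in the application $k=\psi$ is continuous, nonnegative and bounded (Theorem~\ref{thm:homogeneous_solution}), so each $k_j$ is continuous and $\|k_j\|_\infty$ can be controlled on compact time intervals, but since the corollary only records the formal identity between $H$ and a convolution-type kernel, the convergence issue belongs to the context in which the corollary is invoked.

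There is no genuine obstacle here; the only points requiring a little care are (i) getting the integration limits right in the change of variables — the lower limit $s$ of the $\xi$-integral is precisely what makes the shifted variable run from $0$, which is what allows the integral to be recognized as a convolution — and (ii) observing that the iterated convolution $k_j$ is well defined without ordering parentheses, since convolution on the half-line is associative and commutative. Everything else is a direct substitution, so I would keep this argument short.
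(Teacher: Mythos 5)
Your argument is correct and matches the paper's proof in substance: the paper proves the $j=2$ case via the same change of variable $\eta=\xi-s$ and notes that the general case follows by induction, which is exactly the induction you have spelled out. The remarks on associativity of convolution and on deferring convergence of the series $h$ to the point of application are consistent with how the paper uses the corollary.
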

\begin{proof}
We prove $K_2(\tau,s)=k\ast k(\tau-s)$. The rest follows similarly by induction. 
\begin{align*}
K_2(\tau,s) & :=\int_{s}^{\tau}K(\tau,\xi)K(\xi,s) \ \id\xi\\
            & =\int_{s}^{\tau}k(\tau-\xi)k(\xi-s)\ \id\xi\\
            & = \int_0^{\tau-s}k(\tau-s-\eta)k(\eta)\ \id\eta\\
            & = k\ast k(\tau-s)=:k_2(\tau-s),
\end{align*}
where we used the change of variable $\eta=\xi-s$.
\end{proof}

\begin{proof}[Proof of Theorem \ref{thm:properties}]
It follows from the integral equation \eqref{eq:w_inteq} that
\begin{equation}
\abs{\vect{w}(\tau;\vect{y}_0,\vect{w}_0)}\leq \psi(\tau)\abs{\vect{w}_0}+\epsilon L_B\left(1 - \phi(\tau)\right)+\epsilon L_M\int_{0}^\tau \psi (\tau-s)\abs{\vect{w}(s;\vect{y}_0,\vect{w}_0)}\ \id s
\end{equation}
where $\tau \in [0, \delta)$.
Using Lemma \ref{lem:gronwall} with $\alpha(\tau) = \abs{\vect{w}(\tau;\vect{y}_0,\vect{w}_0)}$, $\beta(\tau) = \psi(\tau)\abs{\vect{w}_0} + \epsilon L_B\left(1-\phi(\tau)\right)$ and $K(\tau,s)=\epsilon L_M\psi(\tau-s)$, we get
\begin{align}
\abs{\vect{w}(\tau;\vect{y}_0,\vect{w}_0)} & \leq \psi(\tau)\abs{\vect{w}_0}+\epsilon L_B\left(1-\phi(\tau)\right) + \int_{0}^{\tau}h(\tau-s)\left[\psi(s)|\vect{w}_0|+\epsilon L_B\left(1-\phi(\tau)\right) \right] \ \id s\nonumber\\
          & = \left[\psi(\tau)+\int_{0}^\tau \! h(\tau-s)\psi(s) \ \id s\right]\abs{\vect{w}_0}+\epsilon L_B\left(1-\phi(\tau)\right)\nonumber \\
					&\quad +\epsilon L_B\int_{0}^{\tau}h(\tau-s)\left(1-\phi(s)\right) \ \id s, 
\end{align}
where $h(\tau;\epsilon)=\sum_{j=1}^{\infty}k_j(\tau)$ with $k_1=\epsilon L_M\psi$ and $k_j=k_{j-1}\ast k_1$. Induction on $j$ leads to the expression
$$k_j=(\epsilon L_M)^{j}\psi^{\ast j}.$$
Therefore we have the identity
\begin{align*}
\psi(\tau)+\int_{0}^\tau h(\tau-s)\psi(s) \ \id s & =\frac{k_1(\tau)}{\epsilon L_M}+\int_{0}^{\tau} \!\sum_{j=1}^{\infty}k_j(\tau-s)\frac{k_1(s)}{\epsilon L_M} \ \id s\\
 & = \frac{k_1(\tau)}{\epsilon L_M}+\frac{1}{\epsilon L_M}\sum_{j=1}^{\infty}\int_{0}^{\tau} \! k_j(\tau-s)k_1(s) \ \id s\\
 & = \frac{k_1(\tau)}{\epsilon L_M}+\frac{1}{\epsilon L_M}\sum_{j=1}^{\infty}k_{j+1}(\tau)\\
 & = \frac{1}{\epsilon L_M}\sum_{j=1}^{\infty}k_{j}(\tau)\\
 & = \frac{1}{\epsilon L_M}h(\tau),
\end{align*}
where we omitted the dependence of $h$ on the parameter $\epsilon$ for notational simplicity.

This shows that
\be 
\label{eq:w_ineq1}
\abs{\vect{w}(\tau;\vect{y}_0,\vect{w}_0)} \leq \frac{\abs{\vect{w}_0}}{\epsilon L_M}h(\tau) + \epsilon L_B \left(1-\phi(\tau)\right) + \epsilon L_B \int_{0}^{\tau} h(\tau - s)\left(1-\phi(s)\right) \ \id s.
\ee
Since $0 \leq \phi(\tau) \leq 1$, we have that $(1-\phi(\tau)) \leq 1$ and therefore the inequality can be further simplified to
\be 
\label{eq:w_ineq1_2}
\abs{\vect{w}(\tau;\vect{y}_0,\vect{w}_0)} \leq \frac{\abs{\vect{w}_0}}{\epsilon L_M}h(\tau) + \epsilon L_B\left[1-\phi(\tau) \right]+ \epsilon L_B\int_{0}^{\tau} h(s) \ \id s.
\ee

So far we have assumed that the series $\sum_{j=1}^{\infty}k_j=\sum_{j=1}^{\infty}(\epsilon L_M)^j\psi^{\ast j}$ converges uniformly to a limit $h$. To prove this, we first show that for any $j$ and $\tau\geq 0$, $0\leq \psi^{\ast j}(\tau)\leq 1$. For $j=1$, this property holds since $0\leq \psi\leq 1$. For $j=2$ we have
$$0\leq \psi^{\ast 2}(\tau):=\int_{0}^{\tau}\psi(\tau-s)\psi(s) \ \id s\leq \int_{0}^{\tau}\psi(s) \ \id s=1-\phi(\tau)\leq 1.$$
By induction on $j$, we get $0\leq \psi^{\ast j}(\tau)\leq 1$. As a result, $(\epsilon L_M)^j\psi^{\ast j}\leq (\epsilon L_M)^j$. Since $\epsilon L_M<1$, the series $\sum_{j=1}^{\infty}(\epsilon L_M)^j$ converges. It follows that
\be \label{eq:h_1}
\abs{h(\tau)} \leq \sum_{j=1}^{\infty}(\epsilon L_M)^j = \frac{\epsilon L_M}{1 - \epsilon L_M}
\ee
by summing up the geometric series. By the dominated convergence theorem, the sequence $\sum_{j=1}^{n}(\epsilon L_M)^j\psi^{\ast j}$ converges uniformly to a function $h$ as $n\rightarrow\infty$. Since for any $n$, the series $\sum_{j=1}^{n}(\epsilon L_M)^j \psi^{\ast j}$ is continuous, so is the limiting function $h$. This shows that $h:[0,\infty)\rightarrow \mathbb R$ is continuous and $h\geq 0$.

Now, observe that
\be
\begin{split}
\label{eq:h_2}
\int_{0}^\tau h(\xi) \ \id\xi & =\int_{0}^\tau \sum_{j=1}^{\infty} (\epsilon L_M)^j\psi^{\ast j}(\xi) \ \id \xi \\
       & = \sum_{j=1}^{\infty} (\epsilon L_M)^j \int_{0}^\tau \psi^{\ast j}(\xi) \ \id \xi\\
       & \leq \sum_{j=1}^{\infty} (\epsilon L_M)^j  = \frac{\epsilon L_M}{1 - \epsilon L_M}, \\
\end{split}
\ee
where we used the uniform convergence of the series and the fact that, for any $j$,
\begin{align*}
0\leq \int_{0}^\tau \psi^{\ast j}(\xi)\ \id\xi & \leq\left(\int_{0}^\tau \psi^{\ast (j-1)}(\xi) \ \id\xi\right) \left(\int_{0}^\tau \psi(\xi) \ \id\xi\right) \\
& \leq \cdots \leq \left(\int_{0}^\tau \psi(\xi) \ \id\xi\right)^j= (1-\phi(\tau))^j\leq 1,
\end{align*}
by repeated application of Young's inequality for convolutions. This also shows that $h(\tau)\rightarrow 0$ as $\tau\rightarrow\infty$, since $\abs{h}_1<\infty$ and $h$ is uniformly continuous.

Using inequality \eqref{eq:h_2} in \eqref{eq:w_ineq1_2} and the definition of $h$, we get
\begin{align}
\label{eq:w_ineq2}
\abs{\vect{w}(\tau;\vect{y}_0,\vect{w}_0)} & \leq \frac{\abs{\vect{w}_0}}{\epsilon L_M}h(\tau) + \epsilon L_B\left(1-\phi(\tau)\right) +\frac{\epsilon^2 L_M L_B}{1 - \epsilon L_M} \\
& = \abs{\vect{w}_0}\left[\sum_{j=1}^{\infty}(\epsilon L_M)^{j-1}\psi^{*j}(\tau) \right] + \epsilon L_B\left(1-\phi(\tau)\right) +\frac{\epsilon^2 L_M L_B}{1 - \epsilon L_M}.
\end{align}
This proves part (i) of the theorem.

Taking the sup of $\abs{\vect{w}(\tau;\vect{y}_0,\vect{w}_0)}$ over $[0, \delta)$, we get
\be
\sup_{0\leq \tau<\delta}\abs{\vect{w}(\tau;\vc y_0,\vc w_0)} \leq \frac{\abs{\vect{w}_0} + \epsilon L_B}{1 - \epsilon L_M},
\ee
which proves part (ii) of Theorem \ref{thm:properties}. 

If $\delta = \infty$, then we can take the limitsup of $|\vc w|$. Using inequality \eqref{eq:w_ineq2} we get the asymptotic estimate
\be
\limsup_{\tau\rightarrow\infty}|\vc w(\tau;\vc y_0,\vc w_0)|\leq \frac{\epsilon L_B}{1-\epsilon L_M},
\ee
which proves part (iii) of Theorem \ref{thm:properties}. Here, we used the fact that $\lim_{\tau\rightarrow\infty} h(\tau) = 0$ and $\lim_{\tau\rightarrow\infty} \phi(\tau) = 0$. 
\end{proof}

\section{Proof of Lemma \ref{lemma:WD}}
\label{app:lemma_WD}

Let $\tau_1$, $\tau_2 \in [0, \delta)$. Bound $\abs{\vect{z}_{loc}(\tau_2) - \vect{z}_{loc}(\tau_1)}$ by
\benn
\begin{split}
\abs{\vect{z}_{loc}(\tau_2) - \vect{z}_{loc}(\tau_1)} & \leq  \abs{\vect{y}_{loc}(\tau_2) - \vect{y}_{loc}(\tau_1)} + \abs{\vect{w}_{loc}(\tau_2) - \vect{w}_{loc}(\tau_1)} \nonumber \\
                      & \leq  \abs{\vect{w}_0}\abs{\psi(\tau_2) - \psi(\tau_1)} + \epsilon \int_{\tau_1}^{\tau_2} \! \abs{\vect{w}_{loc}(s)} + \abs{\vect{A_u}(\vect{y}_{loc}(s), s)} \ \id s \nonumber \\
											& + \epsilon \int_{\tau_1}^{\tau_2} \! \psi(\tau_2 - s)\left[\abs{\vect{M_u}(\vect{y}_{loc}(s), s)}\abs{\vect{w}_{loc}(s)} + \abs{\vect{B_u}(\vect{y}_{loc}(s), s)}\right] \ \id s \nonumber \\
											& + \epsilon \int_{0}^{\tau_1} \! \abs{\psi(\tau_2 - s) - \psi(\tau_1 - s)}\left[\abs{\vect{M_u}(\vect{y}_{loc}(s),s)}\abs{\vect{w}_{loc}(s)} + \abs{\vect{B_u}(\vect{y}_{loc}(s), s)}\right] \ \id s \nonumber
\end{split}
\eenn
Without loss of generality suppose $\tau_1 \leq \tau_2$, so that $\abs{\psi(\tau_2 - s) - \psi(\tau_1 - s)} = \psi(\tau_2 - s) - \psi(\tau_1 - s)$. Taking the infinity norm over $[0, \delta)$ to bound $\norm{\vect{M_u}(\vect{y}_{loc}(s),s)}_\infty$, $\norm{\vect{B_u}(\vect{y}_{loc}(s),s)}_\infty$, $\norm{\vect{A_u}(\vect{y}_{loc}(s),s)}_\infty$, and $\abs{\vect{w}_{loc}(s)}$ by Theorem \ref{thm:properties}, we get
\begin{equation*}
\begin{split}
\abs{\vect{z}_{loc}(\tau_2) - \vect{z}_{loc}(\tau_1)} & \leq \abs{\vect{w}_0}\abs{\psi(\tau_2) - \psi(\tau_1)} + \epsilon \left(L_A + \frac{\abs{\vect{w}_0} + \epsilon L_B}{1 - \epsilon L_M}\right) \abs{\tau_2 - \tau_1} \\
                      & + \epsilon \left[\frac{L_M\abs{\vect{w}_0} + L_B}{1 - \epsilon L_M}\right] \left(\abs{\tau_2 - \tau_1} + \int_{0}^{\tau_1} \! \psi(\tau_1 - s) - \psi(\tau_2 - s) \ \id s\right).
\end{split}
\end{equation*}
By the results of Theorem \ref{thm:homogeneous_solution}, $\psi(\tau_1 - s) - \psi(\tau_2 - s) = \phi'(\tau_2 - s) - \phi'(\tau_1 - s) \geq 0$. Integrate and rearrange to obtain
\begin{equation*}
\begin{split}
\abs{\vect{z}_{loc}(\tau_2) - \vect{z}_{loc}(\tau_1)} & \leq \abs{\vect{w}_0}\abs{\psi(\tau_2) - \psi(\tau_1)} + \epsilon \left(L_A + \frac{\abs{\vect{w}_0} + \epsilon L_B}{1 - \epsilon L_M}\right) \abs{\tau_2 - \tau_1} \\
                      & + \epsilon \left[\frac{L_M\abs{\vect{w}_0} + L_B}{1 - \epsilon L_M}\right] \left(\abs{\tau_2 - \tau_1} + \phi(\tau_2) - \phi(\tau_1) \right) \\
                      & + \epsilon \left[\frac{L_1\abs{\vect{w}_0} + L_2}{1 - \epsilon L_1}\right] \left(\phi(0) - \phi(\tau_2 - \tau_1)\right).
\end{split}
\end{equation*}
Since both $\psi$ and $\phi$ are uniformly continuous over $[0, \infty)$ by Theorem \ref{thm:homogeneous_solution}, each of $\abs{\psi(\tau_2) - \psi(\tau_1)}$, $\abs{\phi(\tau_2) - \phi(\tau_1)}$, and $\abs{\phi(0) - \phi(\tau_2 - \tau_1)} \rightarrow 0$ as $\abs{\tau_2 - \tau_1} \rightarrow 0$. Hence $\abs{\vect{z}_{loc}(\tau_2) - \vect{z}_{loc}(\tau_1)} \rightarrow 0$ as $\tau_1$, $\tau_2 \rightarrow \delta_{-}$. 

Now, if we take a sequence $\left. \{t_n\} \right.$ $t_n \in [0, \delta)$ such that $\lim_{n \rightarrow \infty} t_n \rightarrow \delta$, then it follows that $\left. \{\vect{z}_{loc}(t_n)\} \right.$ is a Cauchy sequence. The sequence is convergent in $\mathbb{R}^{2n}$ since $\mathbb{R}^{2n}$ is a complete metric space. The limit is given by the integral equation \eqref{eq:MR_relaxation_nonhomogeneous} evaluated at $\tau = \delta$:
\[\vect{z}_{loc}(\delta) = \left(
  \begin{array}{lr}
    \vect{y}_0 + \epsilon \int_{0}^{\delta} \! \vect{w}_{loc}(s) + \vect{A_u}(\vect{y}_{loc}(s), s) \ \id s \nonumber \\
    \psi(\delta)\vect{w}_0 + \epsilon  \int_{0}^{\delta} \! \psi(\tau - s)\left[-\vect{M_u}(\vect{y}_{loc}(s), s)\vect{w}_{loc}(s) + \vect{B_u}(\vect{y}_{loc}(s), s)\right] \ \id s \nonumber
  \end{array}
\right).
\]
This ends the proof.

\section{Proof of Lemma \ref{lem:continuous}}
\label{app:lemma_continuous}

Let $\vect{\Phi} = \left(\bxi,\betta\right) \in X_K^{\delta,h}$, and $\tau_1$, $\tau_2 \in [\delta, \delta + h)$. Bound $\abs{(\vect{F}\vect{\Phi})(\tau_2) - (\vect{F}\vect{\Phi})(\tau_1)}$ by
\benn
\begin{split}
\abs{(\vect{F}\vect{\Phi})(\tau_2) - (\vect{F}\vect{\Phi})(\tau_1)} & \leq\abs{\vect{\Phi_0}(\tau_2) - \vect{\Phi_0}(\tau_1)} + \epsilon\int_{\tau_1}^{\tau_2} \! \abs{\betta(s)} + \abs{\vect{A_u}(\bxi(s), s)} \ \id s \\
                      & +  \epsilon\int_{\tau_1}^{\tau_2} \! \psi(\tau_2 - s)\left[\abs{\vect{M_u}(\bxi(s), s)}\abs{\betta(s)} + \abs{\vect{B_u}(\bxi(s), s)}\right] \ \id s \\ 
											& + \epsilon \int_{\delta}^{\tau_1} \! \left(\psi(\tau_2 - s) - \psi(\tau_1 - s)\right)\left[\abs{\vect{M_u}(\bxi(s), s)}\abs{\betta(s)} + \abs{\vect{B_u}(\bxi(s), s)}\right] \ \id s,
\end{split}
\eenn
where
\benn
\begin{split}
\abs{\vect{\Phi_0}(\tau_2) - \vect{\Phi_0}(\tau_1)} & \leq \abs{\vect{w}_0}\abs{\psi(\tau_2) - \psi(\tau_1)} \\
                                                    & + \epsilon \int_{0}^{\delta} \! \abs{\psi(\tau_2 - s) - \psi(\tau_1 - s)}\left[\abs{\vect{M_u}(\vect{y}_{loc}(s), s)}\abs{\vect{w}_{loc}(s)} + \abs{\vect{B_u}(\vect{y}_{loc}(s), s)} \right] \ \id s.
\end{split}
\eenn
Without loss of generality suppose $\tau_1 \leq \tau_2$, so that $\abs{\psi(\tau_2 - s) - \psi(\tau_1 - s)} = \psi(\tau_2 - s) - \psi(\tau_1 - s)$. Taking the infinity norm over $[\delta, \delta + h)$ to bound $\norm{\vect{M_u}(\bxi(s), s)}_\infty$, $\norm{\vect{B_u}(\bxi(s), s)}_\infty$, $\norm{\vect{A_u}(\bxi(s), s)}_\infty$, $\norm{\betta(s)}_\infty$, and $\abs{\vect{w}_{loc}(s)}$ by inequality \eqref{eq:int_ineq}, we get
\begin{equation*}
\begin{split}
\abs{(\vect{F}\vect{\Phi})(\tau_2) - (\vect{F}\vect{\Phi})(\tau_1)} & \leq  \abs{\vect{w}_0}\abs{\psi(\tau_2) - \psi(\tau_1)} \\
                      & +     \epsilon \left(\frac{L_M\abs{w_0} + L_B}{1 - \epsilon L_M}\right) \int_{0}^{\delta} \! \psi(\tau_1 - s) - \psi(\tau_2 - s) \ \id s \\
                      & +     \epsilon(K + L_A)\abs{\tau_2 - \tau_1} + \epsilon\left(L_MK + L_B\right)\abs{\tau_2 - \tau_1} \\
											& +     \epsilon\left(L_MK + L_B\right)\int_{\delta}^{\tau_1} \! \psi(\tau_1 - s) - \psi(\tau_2 - s) \ \id s. 
\end{split}
\end{equation*}
By the results of Theorem \ref{thm:homogeneous_solution}, $\psi(\tau_1 - s) - \psi(\tau_2 - s) = \phi'(\tau_2 - s) - \phi'(\tau_1 - s) \geq 0$. Finally, integrate and rearrange to obtain
\begin{equation*}
\begin{split}
\abs{(\vect{F}\vect{\Phi})(\tau_2) - (\vect{F}\vect{\Phi})(\tau_1)} &\leq \abs{\vect{w}_0}\abs{\psi(\tau_2) - \psi(\tau_1)} \\
                      & +     \epsilon \left(\frac{L_M\abs{w_0} + L_B}{1 - \epsilon L_M}\right) \left[\left(\phi(\tau_1 - \delta) - \phi(\tau_2 - \delta)\right) + \left(\phi(\tau_2) - \phi(\tau_1)\right)\right] \\
                      & +     \epsilon(K + L_A)\abs{\tau_2 - \tau_1} + \epsilon\left(L_MK + L_B\right) \abs{\tau_2 - \tau_1}  \\
											& +     \epsilon\left(L_MK + L_B\right)\left[\left(\phi(0) - \phi(\tau_2 - \tau_1)\right) + \left(\phi(\tau_2 - \delta) - \phi(\tau_1 - \delta)\right)\right].
\end{split}
\end{equation*}
Since both $\psi$ and $\phi$ are uniformly continuous over $[0, \infty)$ by Theorem \ref{thm:homogeneous_solution}, each of $\abs{\psi(\tau_2) - \psi(\tau_1)}$, $\abs{\phi(\tau_2) - \phi(\tau_1)}$, $\abs{\phi(\tau_1 -\delta) - \phi(\tau_2 - \delta)}$, and $\abs{\phi(0) - \phi(\tau_2 - \tau_1)} \rightarrow 0$ as $\abs{\tau_2 - \tau_1} \rightarrow 0$. Hence $\abs{(\vect{F}\vect{\Phi})(\tau_2) - (\vect{F}\vect{\Phi})(\tau_1)} \rightarrow 0$ as $\abs{\tau_2 - \tau_1} \rightarrow 0$. This shows that $\vect{F}$ maps $X_K^{\delta,h}$ to a family of uniformly equicontinuous functions in $C(\left[\delta, \delta + h\right); \mathbb{R}^{2n})$.

\end{appendices}


\end{document}